\let\etoolboxforlistloop\forlistloop 
\let\forlistloop\etoolboxforlistloop 
\definecolor{lightred}{rgb}{1, 0.8, 0.8}
\DeclareMathAlphabet{\mathpzc}{OT1}{pzc}{m}{it}
\newtheorem{theorem}{Theorem}
\crefname{theorem}{theorem}{Theorems}
\Crefname{Theorem}{Theorem}{Theorems}
\newtheorem{lemma}{Lemma}
\newaliascnt{corollary}{theorem}
\crefname{corollary}{corollary}{corollaries}
\Crefname{Corollary}{Corollary}{Corollaries}
\newaliascnt{proposition}{theorem}
\crefname{proposition}{proposition}{propositions}
\Crefname{Proposition}{Proposition}{Propositions}
\newaliascnt{definition}{theorem}
\crefname{definition}{definition}{definitions}
\Crefname{Definition}{Definition}{Definitions}
\newaliascnt{remark}{theorem}
\crefname{remark}{remark}{remarks}
\Crefname{Remark}{Remark}{Remarks}
\crefname{example}{example}{examples}
\Crefname{Example}{Example}{Examples}
\crefname{figure}{figure}{figures}
\Crefname{Figure}{Figure}{Figures}
\newtheorem{assumption}{\textbf{H}\hspace{-3pt}}
\newtheorem{assumptionsup}{\textbf{A}\hspace{-3pt}}
\newtheorem{assumptionnona}{\textbf{N}\hspace{-3pt}}
\newtheorem{assumptionnonsup}{\textbf{M}\hspace{-3pt}}
\def\mss{\mathsf{S}}
\def\mcy{\mathcal{Y}}
\def\mcz{\mathcal{Z}}
\def\Rset{\mathbb{R}}
\newcommand{\abs}[1]{\left\vert #1 \right\vert}
\newcommandx{\psr}[3][3=]{\left\langle#1,#2 \right\rangle_{#3}}
\newcommandx{\normr}[2][2=]{ \left\Vert#1 \right\Vert_{#2}}
\newcommandx{\psrLigne}[3][3=]{\langle#1,#2 \rangle_{#3}}
\newcommandx{\normrLigne}[2][2=]{ \Vert#1 \Vert_{#2}}
\newcommandx{\norm}[2][1=]{\ifthenelse{\equal{#1}{}}{\left\Vert #2 \right\Vert}{\left\Vert #2 \right\Vert^{#1}}}
\newcommand\probaMarkovTilde[2][2=]
\def\eqsp{\;}
\newcommand\sequence[3][2=,3=]
\newcommand\sequenceD[3][2=,3=]
\newcommand\sequenceDouble[4][3=,4=]
\def\Id{\mathrm{Id}}
\newcommand{\beq}{\begin{equation}}
\newcommand{\eeq}{\end{equation}}
\def\dz{\operatorname{dz}}
\def\argmin{\operatorname{argmin}}
\newcommand*{\dd}{\mathop{}\!\mathrm{d}}
\def\ee{~}
\newcommandx{\gperthmc}[2][1=,2=]{\ifthenelse{\equal{#1}{}}{\Xi}{\ifthenelse{\equal{#2}{}}{\Xi_{h,#1}}{\Xi_{#2,#1}}}}
\newcommandx{\Phiverlet}[2][1=,2=]{\ifthenelse{\equal{#1}{}}{\Phi}{\Phi_{#1}^{\circ (#2)}}}
\newcommandx{\gpertub}[2][1=,2=]{\ifthenelse{\equal{#1}{}}{g}{g_{#1}^{#2}}}
\newcommandx{\Phiverletq}[2][1=,2=]{\ifthenelse{\equal{#1}{}}{\widetilde{\Phi}}{\widetilde{\Phi}_{#1}^{\circ (#2)}}}
\newcommandx{\Phiverletqi}[2][1=,2=]{\ifthenelse{\equal{#1}{}}{\bar{\Psi}}{\bar{\Psi}_{#1}^{(#2)}}}
\newcommandx{\Pkerhmc}[2][1=,2=]{\ifthenelse{\equal{#1}{}}{\mathrm{P}}{\mathrm{P}_{#1, #2}}}
\newcommandx{\tPkerhmc}[2][1=,2=]{\ifthenelse{\equal{#1}{}}{\tilde{\mathrm{P}}}{\tilde{\mathrm{P}}_{#1, #2}}}
\newcommandx{\PkerhmcD}[2][1=,2=]{\ifthenelse{\equal{#1}{}}{\mathrm{K}}{\mathrm{K}_{#1, #2}}}
\def\argmax{\text{argmax}}
\newcommandx{\thetahat}[1][1=]{\ifthenelse{\equal{#1}{}}{\hat{\theta}}{\hat{\theta}(#1)}}
\newcommandx{\Kerpi}[2][1=,2=]{\ifthenelse{\equal{#1}{}}{\Pi}{\Pi_{#1}^{#2}}}
\newcommandx{\poinu}[2][1=,2=]{\ifthenelse{\equal{#1}{}}{\nu}{\nu_{#1}^{#2}}}
\def\lyapD{\mathpzc{V}}
\begin{document}

%
\runningtitle{SA with Biased MCMC for EM}

%

\twocolumn[

\aistatstitle{Stochastic Approximation with Biased MCMC for\\Expectation Maximization}

\aistatsauthor{ Samuel Gruffaz \And Kyurae Kim \And Alain Oliviero Durmus \And Jacob R. Gardner }

\aistatsaddress{ Centre Borelli, ENS Paris-Saclay \And  UPenn \And CMAP, CNRS, \'Ecole Polytechnique \And UPenn} ]

\begin{abstract}
  The expectation maximization (EM) algorithm is a widespread method for empirical Bayesian inference, but its expectation step (E-step) is often intractable.
  Employing a stochastic approximation scheme with Markov chain Monte Carlo (MCMC) can circumvent this issue, resulting in an algorithm known as MCMC-SAEM. 
  While theoretical guarantees for MCMC-SAEM have previously been established, these results are restricted to the case where asymptotically unbiased MCMC algorithms are used.
  In practice, MCMC-SAEM is often run with asymptotically biased MCMC, for which the consequences are theoretically less understood.
  In this work, we fill this gap by analyzing the asymptotics and non-asymptotics of SAEM with biased MCMC steps, particularly the effect of bias.
  We also provide numerical experiments comparing the Metropolis-adjusted Langevin algorithm (MALA), which is asymptotically unbiased, and the unadjusted Langevin algorithm (ULA), which is asymptotically biased, on synthetic and real datasets. 
  Experimental results show that ULA is more stable with respect to the choice of Langevin stepsize and can sometimes result in faster convergence.
\end{abstract}

\section{INTRODUCTION}

Probabilistic modeling with latent variables is an essential tool for modeling observational data generated from complex latent structures.
While eliciting priors for this class of models is often straightforward, \textit{e.g}, data generated from unobserved groups can be modeled using mixtures \citep{mclachlan2019finite}, while group-level variabilities can be modeled with mixed effects \citep{kuhn2005maximum}, how we should set the hyper-parameters $\theta$ is not always self-evident. 
For this, the empirical Bayes paradigm applies the \textit{maximum likelihood principle}~\citep{Robbins1956empirical,Efron2019bayes}.
That is, it infers the (hyper-)parameters $\theta$ from data by maximizing the marginal log-likelihood $l(\theta)\triangleq\log(p(y|\theta))$,
{%
\setlength{\belowdisplayskip}{.5ex} \setlength{\belowdisplayshortskip}{.5ex}
\setlength{\abovedisplayskip}{.5ex} \setlength{\abovedisplayshortskip}{.5ex} 
\begin{equation}
  \label{eq:original_problem}
  \hspace{-.7em}
  \argmax_{\theta \in \Theta}  l(\theta)=\argmax_{\theta \in \Theta} \log\left(\int_{\mcz} p(y,z|\theta)\mathrm{d}z\right)
\end{equation}
}%
where $y\in\mathcal{Y}\subset \Rset^{d_y}$ 
 are the observations, $z\in \mathcal{Z}\subset \Rset^{d_z}$ are the latent variables, and $\theta \in \Theta \subseteq \Rset^{d_\theta}$ denote the parameters.

Unfortunately, the marginal log-likelihood is often intractable, making the empirical Bayes problem hard.
As a solution, \citet{dempster1977maximum} have proposed the \textit{expectation-maximization} (EM) algorithm, which has subsequently been immensely successful in practice, and its convergence properties have been studied extensively~\citep{Mclachlan2007em}.
It is now known to converge to stationary points under mild conditions.
However, the canonical EM algorithm may not apply immediately in many practical cases.
For example, gradient ascent steps must be used when the maximization step (M-step) of EM is not available in closed form~\citep{baey2023efficient}.

This work focuses on the setting where the expectation step (E-step) is intractable.
In this case, the integral in~\Cref{eq:original_problem} needs to be numerically approximated using methods such as Monte Carlo (MC), importance sampling (IS), and Markov chain Monte Carlo (MCMC) (See the book by \citet{Robert2004monte} for an overview of these methods). 
Furthermore, when the model belongs to the exponential family such that the M-step can be computed in closed-form, \citet{delyon1999convergence} show that the EM algorithm can be reduced to approximating sufficient statistics generated from the sampling algorithm.
In particular, they leverage stochastic approximation (SA; \citealp{robbins1951stochastic}) for approximating sufficient statistics, resulting in the SAEM algorithm.
SAEM has been shown to be particularly efficient in practice, especially since most practical models can be embedded in an exponential family form~\citep{debavelaere2021curved}, and have been widely used through multiple popular software packages such as \texttt{NONMEM}~\citep{Bauer2019nonmem}, \texttt{saemix} \citep{Comets2017parameter}, \texttt{Monolix} \citep{lavielle2014mixed}, and many more.

Leveraging SA comes at a price: The performance of SAEM crucially depends on that of the underlying sampling algorithm.
However, most popular MCMC algorithms used for SAEM, such as the Metropolis-adjusted Langevin Algorithm (MALA; \citealp{roberts1996exponential}) and Hamiltonian Monte Carlo (HMC; \citealp{Duane1987hybrid}), perform poorly unless carefully tuned, especially in high dimensions.
While targeting an ``optimal'' acceptance rate is an effective way to tune these algorithms~\citep{Gilks1998adaptive}, this is not straightforward in the SAEM context: the limiting distribution of the Markov Chain is not fixed.

Note that the problem of tuning stems from the fact that we are employing Metropolis-Hastings adjustments.
Therefore, not adjusting at all, resulting in approximate MCMC methods such as the unadjusted Langevin algorithm (ULA), would immediately resolve this issue.
While these methods are ``approximate'' in the sense that their limiting stationary distribution is biased, tuning is less critical to their performance~\citep{Valentinde2019efficient}.
Furthermore, ULA has theoretically been shown to converge faster than its unbiased counterpart MALA~\citep{durmus2017nonasymptotic}.
Therefore, this work establishes theoretical guarantees for SAEM with biased MCMC.
Also, we compare the performance of MCMC-SAEM with MALA versus ULA on practical examples.
The results suggest that, in general, ULA can use larger stepsizes than MALA, resulting in faster convergence.

\vspace{-1ex}
\paragraph{Contributions} 
Our contributions are two-fold: the asymptotic analysis (\Cref{section:asymptotic}) and non-asymptotic analysis (\Cref{section:Non_asymptotic}) of MCMC-SAEM with biased MCMC.
For the asymptotic analysis, we generalize the analysis of~\citet{tadic2017asymptotic} on stochastic gradient optimization to the case of SA.
Furthermore, we explicitly control the asymptotic convergence according to the bias of MCMC through the smoothness of the problem.
This improves over the almost sure convergence results provided by \citet[\S V]{Dieuleveut2023stochastic}.
For our non-asymptotic analysis, we extend the framework~\citet{karimi2019non-asymptotic} to include asymptotic bias, and the guarantee is in high-probability.
This characterizes the effect of Markov chain concentration, unlike the convergence results in expectation provided by~\citet{karimi2019non-asymptotic,Dieuleveut2023stochastic}.

\section{BACKGROUND}
\label{section:MCMCSAEM}
 
\vspace{-1ex}
\subsection{Expectation-Maximization}
\vspace{-1ex}
We assume that the joint distribution of the observations and latent variable belongs to the curved exponential family:
\begin{assumption}
  \label{assumption:exponential_family}
For any $y\in \mcy,z\in \mcz$ and $\theta\in \Theta$,
{%
\setlength{\belowdisplayskip}{1ex} \setlength{\belowdisplayshortskip}{1ex}
\setlength{\abovedisplayskip}{1ex} \setlength{\abovedisplayshortskip}{1ex} 
$$ p(y,z|\theta)=h(y,z)\exp(S(y,z)^{\top}\phi(\theta)-\psi(\theta))\eqsp , $$
}%
where $\Theta,\mathcal{Z}$, are open, $S:\mcy\times \mcz \to  \Rset^d $ is continuous, and $\phi: \Theta \to \Rset^d $ and $\psi: \Theta\to \Rset$ are continuously differentiable.
\end{assumption}
\vspace{-1ex}
In this paper, $y\in \mathcal{Y}$ is fixed, and the parameter $\theta$ is estimated through maximum marginal log-likelihood $l :\theta\in \Theta \mapsto \log p(y|\theta)$.
In most cases, $l$ cannot be maximized directly and is not even tractable.
EM solves this by employing the \textit{majorize-minimize principle}: (See, \textit{e.g.}, \citealp[Chapter 8]{lange2016mm}.)
An upper bound of $-l$ is derived using Jensen's inequality and a density $q$ on $\mcz$, for any $\theta \in \Theta$,
{%
\setlength{\belowdisplayskip}{1ex} \setlength{\belowdisplayshortskip}{1ex}
\setlength{\abovedisplayskip}{.5ex} \setlength{\abovedisplayshortskip}{.5ex} 
 \begin{align}
  &-\log p(y|\theta)=\log \left( \int_{\mcz} p(y,z|\theta) \times\frac{q(z)}{q(z)}\operatorname{dz} \right) \\
  &\leq -\int_{\mcz} \log\left(\frac{p(y,z|\theta)}{q(z)}\right)q(z) \operatorname{dz}=Q(\theta,q)\\
  &=-\left(\mathbb{E}_{z\sim q}(\log(p(y,z|\theta ))) \dz
  + \operatorname{Ent}(q)+\text{cst}(y)\right) \eqsp .
 \end{align}
 }%
 Denoting $D_\mcz \triangleq \{f\in \mathrm{L}^1(\mcz):\,f\geq 0,\, \int_\mcz f \dz=1 \}$, the function $q\in D_\mcz\mapsto Q(\theta,q)$ is minimized by $q^*(z)\triangleq p(z|y,\theta)$ such that $Q(\theta,q^*)=-l(\theta)$. 
 Thus, by considering
 $\theta^* = \argmin_{\theta \in \Theta} Q(\theta,q^*)$, we have $l(\theta^*)\leq l(\theta)$. 
 This procedure offers a recipe to construct a maximizing sequence of $l$.
 Moreover, under \Cref{assumption:exponential_family}, 
  $$\theta^* =\argmax_{\theta \in \Theta} \mathbb{E}_{z\sim p(z|y,\theta)}\left(S(y,z)\right)^{\top}\phi(\theta)-\psi(\theta)\eqsp . $$
 We introduce the following assumption to maximize this last expression:
\begin{assumption}
    \label{assumption:Maximization}
    Denoting $L(s,\theta)\triangleq s \cdot \phi(\theta)-\psi(\theta)$, there exists a function $\hat{\theta}:\Rset^d \to \Theta $, such that for any $(s,\theta)\in \Rset^d\times \Theta$,
    $$  \ee L(s,\theta)\leq L(s,\hat{\theta}(s))\eqsp .$$
\end{assumption}
\vspace{-2ex}
If necessary, this assumption can also be met by ``exponentializing'' as shown by~\citet{debavelaere2021curved}.

The EM algorithm is defined as follows:
Let $(s_k)_{k\geq 0},(\theta_k)_{k\geq 0} $ be initialized from $\theta_0\in \Theta $ and follow the recursion for any $k\geq 0$:
 \begin{enumerate}
\vspace{-1ex}
    \item[\ding{182}] \textbf{Expectation:} Denoting by $\bar{s}: \theta \in \Theta \mapsto \int_{\mcz} S(y,z)p(z|\theta,y)\dz$,
        set $s_k=\bar{s}(\theta_k)$.
    \item[\ding{183}] \textbf{Maximization:} Set $\theta_{k+1}=\hat{\theta}(s_k)$, which implies $l(\theta_{k+1})\geq l(\theta_k)$.
\vspace{-1ex}
 \end{enumerate} 
This algorithm quickly converges towards a local maximum of $l$ under mild conditions. (See the review by~\citet{Mclachlan2007em}.)

\subsection{EM as a Root Finding Problem}
The EM recursion can be seen as a fixed-point iterative scheme since, for any $k\geq 0$, $\theta_{k+1}=\hat{\theta}\circ \bar{s}(\theta_k)$.
 If the sequence $(s_n)$ converges towards $s^*$ and $\bar{s}\circ \hat{\theta}$ is continuous, we then have
{%
\setlength{\belowdisplayskip}{1ex} \setlength{\belowdisplayshortskip}{1ex}
\setlength{\abovedisplayskip}{1ex} \setlength{\abovedisplayshortskip}{1ex} 
\begin{equation} 
  h(s^*)=0,\eqsp \text{where}\eqsp h\triangleq\bar{s}\circ \hat{\theta}(\cdot)-\Id \eqsp .
\end{equation}
}%
Moreover, if $s^*$ is a root of $h $, then $\hat{\theta}(s^*)$ is a root of $\hat{\theta}\circ \bar{s}(\cdot)-\Id$.
And reciprocally, if $\theta^*$ is a root of $\hat{\theta}\circ \bar{s}(\cdot)-\Id $, then $\bar{s}(\theta^*)$ is a root of $h$.
This suggests that EM can be reduced to a problem of finding the root \(h\) with respect to \(s\).

\paragraph{EM as Stochastic Approximation}
Finding the root is more general than gradient descent as \(h\) may not be the \textit{gradient} of any known function.
Instead, it is the \textit{descent direction} for some Lyapunov function $V: s\in \Rset^d \mapsto -l\circ \hat{\theta}(s)$.
As such, the proposed scheme is part of the more general stochastic approximation (SA) framework \citep{Dieuleveut2023stochastic}.
The most basic form of the SA algorithm is described as follows:
\begin{align}
  s_{k+1}=s_k+\gamma_{k+1}H(s_k,Z_{k+1}),\;\; Z_{k+1}\sim p( \cdot |\theta_k,y)\eqsp ,
  \nonumber
\end{align}
where $H(s_k,Z_{k+1})=S(y,Z_{k+1})-s_k$ is a random oracle of $h(s_k)$ and $(\gamma_k)_k$ is a deterministic stepsize sequence.
Solving the EM problem using these iterates is known as SAEM algorithm~\citep{delyon1999convergence}. 

While we have motivated the reduction of optimizing \(l\) to finding the root of \(h\), it is not apparent the solutions such that $h=0$ and $\nabla l=0$ are equivalent.
The following lemma will clarify the relationship between \(h\), \(l\), and the Lyapunov function \(V\) given assumptions on the Hessian of $L$ at its maximums:
   \begin{assumption}
    \label{hyp:regularity_loss}
    The functions $\hat{\theta},l(\cdot),\phi(\cdot),\bar{s}(\cdot)$ and $\psi(\cdot)$ are $p$-continuously differentiable with $p>d$.
    Moreover, denote
    \[
      A(s)\triangleq \partial_s \hat{\theta}(s)^{\top} \partial_\theta^2 L(s,\hat{\theta}(s)) \partial_s \hat{\theta}(s)\eqsp .
    \]
    Then, there exist $\lambda_m,\lambda_M>0$ such that for any $s\in \Rset^d$:
    \begin{equation}
        \label{eq:A_bound}
        \lambda_m|v|^2\leq \left\langle A(s)v|v \right\rangle\leq \lambda_M |v|^2,\ee v\in \Rset^d \eqsp .
    \end{equation}
\end{assumption}
    \begin{lemma}
      \label{lemma:preliminary}
      Under \Cref{assumption:exponential_family}, \Cref{assumption:Maximization} and \Cref{hyp:regularity_loss},  $V$ is $p$-continuously differentiable and verifies for any $s\in \Rset^d$,
      \begin{align}
        \label{eq:reg_V_h}
        &F(s)\triangleq \left\langle\nabla V(s)| h(s) \right\rangle\leq - \lambda_m |h(s)|^2\\
        & |\nabla V(s)|\leq \lambda_M |h(s)| \\
        & \mathsf{S}=\{s\in \Rset^d: F(s)=0\}=\{s\in \Rset^d: \nabla V(s)=0 \}\\
        & \hat{\theta}(\mathsf{S})=\{\theta \in \Theta: l(\theta)=0\}, \eqsp \operatorname{int}(V(\mathsf{S}))=\emptyset
      \end{align}
    \end{lemma}
    \begin{proof}
In the proof by \citet[Lemma 2]{delyon1999convergence}, they derive that for any $s\in \Rset^d$, $\nabla V(s)=-A(s)h(s)$, the results are then straightforward with the regularity assumption on $A$.
\vspace{-1.5ex}
\end{proof}
In other words, $h(s)$ is a proxy of $-\nabla V(s)$ for any $s\in \Rset^d$. 
This Lemma makes it clear that converging to $\mathsf{S}$ by working in the sufficient statistics space recovers the solutions of the original problem \eqref{eq:original_problem}.
From now on, we drop the dependence of $y$ in \(S(y,z)\) such that $S(z)\triangleq S(y,z)$ since we are interested only in $z$.


 
\paragraph{MCMC-SAEM}
SAEM can be generalized to models with intractable likelihoods by leveraging Markov chain Monte Carlo (MCMC) as proposed by~\citet{kuhn2004coupling}.
An MCMC algorithm form a Markov kernel $\Pi_\theta $, such that, for any $z\in\mcz$, $\lim_{n\to\infty}||\Pi_\theta(z,\cdot)^{(n)}-\pi_\theta ||_{\text{TV}}=0$, where
$\pi_\theta$ is a the target distribution related to $p(\cdot|\theta,y)$. 
It means that asymptotically, sampling from $ \Pi_\theta(z,\cdot)$ is nearly equivalent to sampling from $\pi_\theta$.
Then, the Markov chain is said to be asymptotically unbiased, contrary to the case $\lim_{n\to\infty}||\Pi_\theta(z,\cdot)^{(n)}-\pi_\theta ||_{\text{TV}}>0$, where we say that the chain is asymptotically biased.

Note that, in general, the chain is biased in finite time $||\Pi_\theta(z,\cdot)^{(n)}-\pi_\theta ||_{\text{TV}}>0 $ for any $n\geq 0$.
Despite this, MCMC-SAEM, MCMC applied to SAEM, generates a sequence $(\theta_k)_k$ that converges to a local maximum almost surely \citep{Dieuleveut2023stochastic,kuhn2004coupling}, where the $E$-step is replaced by 
$$s_{k+1}=s_k+\gamma_{k+1}(S(y,z_{k+1})-s_k),\ee z_{k+1} \sim \Pi_{\theta_k}(z_{k},\cdot)\eqsp, $$
for any $k\geq 0$, starting with $(s_0,z_0)\in \Rset^d\times \mcz$.

\subsection{MCMC-SAEM with Approximate MCMC Algorithms}
At the core of the practical performance of MCMC-SAEM is the choice of the MCMC algorithm. 
While MALA is often used, its asymptotically biased counterpart ULA has been shown to mix faster at high dimensions \citep{durmus2017nonasymptotic}.
This means that, in finite time, the total bias of ULA can be lower compared to the asymptotically unbiased MALA.
Therefore, while we consider a general biased Markov chain to study the convergence of MCMC-SAEM in \Cref{section:asymptotic} and \ref{section:Non_asymptotic}, we are interested in studying the difference between ULA and MALA.
Our experiments in \Cref{section:experiments} will exclusively focus on this question.

\paragraph{ULA and MALA}
\label{para:ULAMALA}
The ULA Markov chain $(X_k)_{k\geq 0}$ is derived from the Euler–Maruyama discretization scheme associated with the Langevin diffusion related to the force $U\triangleq -\nabla \log(\pi)$ if $\pi$ is the target distribution, at iteration $k\geq 0$
{%
\setlength{\belowdisplayskip}{1.5ex} \setlength{\belowdisplayshortskip}{1.5ex}
\setlength{\abovedisplayskip}{1.5ex} \setlength{\abovedisplayshortskip}{1.5ex} 
\begin{equation}
     X_{k+1}=X_k-\eta_{k+1} \nabla U(X_k)+\sqrt{2\eta_{k+1}}Z_{k+1}
\end{equation}
}%
where $(Z_k)_{k\geq 0}$ is an i.i.d sequence of standard Gaussian $d$-dimensional random vectors and $(\eta_k)_{k\geq 1}$ is a sequence of stepsize, which can be either constant or decrease to 0.
The MALA Markov chain follows the same recursion by adding an acceptation/rejection step of $X_{k+1}$, making the chain asymptotically unbiased. 

The properties of MALA chains have been studied in depth in \citep{roberts1996exponential}, where it is shown that the chain converges geometrically fast (geometric ergodicity) under mild assumptions on the tail of the target distribution.
This is key since geometric ergodicity is the main assumption of the MCMC-SAEM convergence theorem \citep{kuhn2004coupling}.
Despite being geometrically ergodic, the adjustment step of MALA can become a curse in high dimensions.
To maintain a sufficient level of acceptance, a smaller stepsize must be used, resulting in a slow mixing chain. (Mixing is a fundamentally non-asymptotic notion, unlike geometric ergodicity.)

On the other hand, for ULA, denoting its limiting distribution as $\pi^\eta$, the mixing rate improves with $\eta$ \citep{durmus2017nonasymptotic} at the cost of increasing $||\pi^\eta-\pi ||_{\text{TV}}$ \citep{Valentinde2019efficient}.
This means one can trade off asymptotic bias for a faster mixing rate.
Considering this, we will expand the existing analysis, where $\eta$ is no longer a stepsize but a ``knob'' that controls bias, which can vary across the iterations.

\subsection{Stochastic Approximation with Biased Dynamics}
To incorporate biased MCMC chains into our analysis, we slightly modify the SA formalism \citep{Dieuleveut2023stochastic} to allow some freedom on the bias parameter $\eta$.

Let $(\gamma_n)_n$ and $(\eta_n)_n$ be two monotone nonincreasing sequences with $\gamma_0,\eta_0\in [0,1]^2$.
Define the nonhomogeneous Markov chain $\left\{Y_n^\gamma=(Z_n,S_n)\right\}_n$ on $\mcz\times \Rset^d$ as follows:
 Let $s_0=\theta \in \Rset^d, \ee z_0=z\in \Rset^d $ and for $n\geq 0$,
{%
\setlength{\belowdisplayskip}{1.5ex} \setlength{\belowdisplayshortskip}{1.5ex}
\setlength{\abovedisplayskip}{1ex} \setlength{\abovedisplayshortskip}{1ex} 
\begin{align}
    Z_{n+1} &\sim \Pi_{s_n}^{\eta_{n+1}}(Z_n, \cdot ) \nonumber \\
    s_{n+1} &=s_n+\gamma_{n+1} H(s_n,Z_{n+1}) 
    \eqsp,
    \label{eq:def_inhomogeneous_markov_chain}
\end{align}
}%
where $\{\Kerpi[s][\eta]=\Kerpi[\thetahat(s)][\eta], s,\eta\in \Rset^d\times (0,\eta_0]\} $ is a family of Markov transition probabilities and $H: s,z\in \Rset^d\times \mcz\mapsto S(z)-s $ is a field which satisfy the following conditions:
\begin{assumption}
    \label{hyp:reg_kernel}
    For any $s \in \Rset^d $ and $\eta \in (0,\eta_0]$, the Markov kernel $\Kerpi[s][\eta]$ has a single stationary distribution $\pi_{\thetahat[s],\eta}$, also denoted as $\pi_{s,\eta}$, such that $ \pi_{s,\eta}\Kerpi[s][\eta]=\pi_{s,\eta} $.
    In addition, $H:\Rset^d\times \mcz\to \Rset^d$ is measurable for all $s\in \Rset^d$
    and $\int_\mcz |H(s,z)| \pi_{s,\eta }(\dd z)<\infty$ .
    \vspace{-1ex}
\end{assumption}
By considering the filtration $\left\{\mathcal{F}_n=\sigma(s_0,Z_i,i\leq n)\right\}_{n}$, 
the following decomposition clarifies the deterministic and stochastic parts of the dynamic:
{%
\setlength{\belowdisplayskip}{-2ex} \setlength{\belowdisplayshortskip}{-2ex}
\setlength{\abovedisplayskip}{1ex} \setlength{\abovedisplayshortskip}{1ex} 
\begin{align}
    &H(s_n,Z_{n+1}) = h(s_n)+\xi_n \eqsp, \\
    &\xi_n = e_n+\beta_n \eqsp,  \\
    &e_n = H(s_n,Z_{n+1})-\mathbb{E}_{W\sim\pi_{s_n,\eta_{n+1}} }\left[ H(s_n,W)\right]\eqsp, \\
    &\beta_n = \mathbb{E}_{W\sim\pi_{s_n,\eta_{n+1}} }\left[ H(s_n,W)\right]-h(s_n)\eqsp , \\
\end{align}
}%
where $h(s_n)$ is the mean field drift, $e_n$ is the Markovian noise and $(\beta_n)_n$ the bias.
 In the MCMC-SAEM context, this becomes
{%
\setlength{\belowdisplayskip}{1.5ex} \setlength{\belowdisplayshortskip}{1.5ex}
\setlength{\abovedisplayskip}{1.5ex} \setlength{\abovedisplayshortskip}{1.5ex} 
\begin{align}
&e_n=S(z_{n+1})-\check{s}_{\eta_{n+1}}(\thetahat[s_n]) \eqsp, \\
&\eqsp \beta_n=\check{s}_{\eta_{n+1}}(\thetahat[s_n])-\bar{s}(\thetahat[s_n])\eqsp ,
\end{align}
}%
where, for any $s\in \Rset^d$ and $\eta \in (0,\eta_0]$,
{%
\setlength{\belowdisplayskip}{0ex} \setlength{\belowdisplayshortskip}{0ex}
\setlength{\abovedisplayskip}{1ex} \setlength{\abovedisplayshortskip}{1ex} 
\begin{align}
h(s) &=\bar{s}(\thetahat[s])-s \eqsp, \\
\check{s}_\eta(\thetahat[s]) &=\int S(z) \pi_{s,\eta}(\dd z) \eqsp.
\end{align}
}%
This way, for any $s\in \Rset^d$, $\check{s}_\eta(\thetahat[s])$ is the biased approximation of $\bar{s}(\thetahat[s]) $. 
Lastly, we assume that the asymptotic bias is finite:
\vspace{1ex}
\begin{assumption}
    \label{hyp:bias_control}
{%
\setlength{\belowdisplayskip}{-1ex} \setlength{\belowdisplayshortskip}{0ex}
\setlength{\abovedisplayskip}{0ex} \setlength{\abovedisplayshortskip}{0ex} 
    $\limsup_{n\to \infty} |\beta_n|=\beta<+\infty$
}%
\end{assumption}
This assumption is reasonable if the growth of $z\mapsto S(z)$ can be mitigated by the tail decay of $\pi_{\theta,\eta}$ and $\pi_\theta$.

\section{ASYMPTOTIC ANALYSIS}
\label{section:asymptotic}
We estimate how the asymptotic bias of MCMC impacts the asymptotic convergence of $(s_n)$ generated by the recursion \eqref{eq:def_inhomogeneous_markov_chain}.
The analysis takes most of its arguments from \citep{tadic2017asymptotic} where the framework is slightly different, $h=\nabla f$ where $f$ has the same regularity that in \Cref{hyp:regularity_loss}.
We extend their results by using \Cref{lemma:preliminary}, i.e., the mean field is not a gradient but the proxy of a Lyapunov gradient.
The results are not specific to the SAEM scheme but can be generalized to other SA schemes as \Cref{lemma:preliminary} applies, but this is beyond the scope of this paper.

\subsection{Technical Assumptions} 
All the assumptions presented in this part are also used in \citep{tadic2017asymptotic}.
We take the following assumptions on $(\gamma_n)$, $(e_n)$:
 \begin{assumptionsup}
     \label{hyp:gamma_n}
     $\limsup _{n \rightarrow \infty}\left|\gamma_{n+1}^{-1}-\gamma_n^{-1}\right|<\infty$, $\lim_{n\to\infty}\gamma_n =0$ and $\sum_{n=0}^\infty \gamma_n=\infty$.
 \end{assumptionsup}
 Denoting by $a(n,t)=\max \left\{k\geq n:\sum_{i=n}^{k-1}\gamma_i\leq t \right\} $ for $n\geq 0$ and $t\in(0,\infty)$, $a(n,t)$ is well defined thank to \Cref{hyp:gamma_n}. 
 Furthermore, we assume:
 \begin{assumptionsup}
     \label{hyp:noise_bias_control}
     $(e_n)$ and $(\beta_n) $ are $\Rset^d$-valued stochastic processes satisfying,
     {%
     \setlength{\belowdisplayskip}{1ex} \setlength{\belowdisplayshortskip}{1ex}
     \setlength{\abovedisplayskip}{0ex} \setlength{\abovedisplayshortskip}{0ex} 
     \begin{equation}
         \lim_{n\to \infty} \max_{n\leq k <a(n,t)} \abs{\sum_{i=n}^k \gamma_i e_i} =0
     \end{equation}
     }%
     almost surely on $\{ \sup_n |s_n|<\infty\}$ for any $t\in(0,\infty)$.
 \end{assumptionsup}
 \Cref{hyp:noise_bias_control} can be established from assumptions of geometric ergodicity on the Markov kernels $\{\Pi_s^{\eta},\eqsp s\in\Rset^d,\eqsp \eta \in (0,\eta_0]\} $ and by controlling the growth of the sufficient statistics.
 Due to its technicalities, this development is relegated to \Cref{proof:simplification}.

 In numerous cases, we can even make the following assumption,
 \begin{assumptionsup}
     \label{hyp:analytic}
     $V(\cdot)$ is real analytic on $\Rset^d$.
 \end{assumptionsup}
This implies that $V$ can be locally represented by a power series.

\subsection{Main Result} 
 For any compact $Q\subset \Rset^d$, $\Lambda_Q$ denotes the event $\cup_{n=0}^\infty \cap_{k=n}^\infty \{\theta_k\in Q \}$, such that on $ \Lambda_Q$, $\{\theta_k\}$ is bounded and the bound is controlled by $Q$.
 The main result on the asymptotic bias of the recursion \eqref{eq:def_inhomogeneous_markov_chain} can be stated as follows:

 \begin{theorem}
     \label{thm:asymptotic}
      Suppose that \Cref{assumption:exponential_family}-\ref{hyp:bias_control} , \Cref{hyp:gamma_n}-\ref{hyp:noise_bias_control}. Let $Q \subset \mathbb{R}^{d}$ be any compact set. Then, the following are true:
      \vspace{-2ex}
      \begin{enumerate}[label=(\Roman*),wide, labelindent=0pt]
 \item \label{thm:asymp:item1} There exists a (deterministic) non-decreasing function $\psi_Q$ : $[0, \infty) \rightarrow$ $[0, \infty)$ 
 (independent of $\eta$ and depending only on $V(\cdot)$ ) such that $\lim _{t \rightarrow 0} \psi_Q(t)=\psi_Q(0)=0$ and
{%
\setlength{\belowdisplayskip}{1ex} \setlength{\belowdisplayshortskip}{1ex}
\setlength{\abovedisplayskip}{1ex} \setlength{\abovedisplayshortskip}{1ex} 
 $$
 \limsup _{n \rightarrow \infty} d\left(s_n, \mss \right) \leq \psi_Q(\beta)
 $$
}%
 almost surely on $\Lambda_Q$.
 \item \label{thm:asymp:item2} There exists a real number $K_Q \in(0, \infty)$ (independent of $\beta$ and depending only on $V(\cdot))$ such that
{%
\setlength{\belowdisplayskip}{1ex} \setlength{\belowdisplayshortskip}{1ex}
\setlength{\abovedisplayskip}{1.5ex} \setlength{\abovedisplayshortskip}{1.5ex} 
 \begin{align}
 &\limsup _{n \rightarrow \infty}\left\|\nabla V\left(s_n\right)\right\| \leq K_Q \beta^{q / 2}\eqsp ,\\
 & \quad \limsup _{n \rightarrow \infty} V\left(s_n\right)-\liminf _{n \rightarrow \infty} V\left(s_n\right) \leq K_Q \beta^q
 \end{align}
 }%
 almost surely on $\Lambda_Q$, where $q=\left(p-d\right) /(p-1)$.
 \item \label{thm:asymp:item3}If $V(\cdot)$ satisfies \Cref{hyp:analytic} there exist real numbers $r_Q \in(0,1), L_Q \in(0, \infty)$ (independent of $\beta$ and depending only on $V(\cdot))$ such that
{%
\setlength{\belowdisplayskip}{1ex} \setlength{\belowdisplayshortskip}{1ex}
\setlength{\abovedisplayskip}{1ex} \setlength{\abovedisplayshortskip}{1ex} 
 \begin{align}
  &\limsup _{n \rightarrow \infty}\left\|\nabla V\left(s_n\right)\right\| \leq L_Q \beta^{1 / 2} \\
  &\limsup _{n \rightarrow \infty} d(V\left(s_n\right), V(\mss )) \leq L_Q \beta \\
  &\quad \limsup _{n \rightarrow \infty} d\left(s_n, \mss\right) \leq L_Q \beta^{r_Q}\eqsp .
 \end{align}
 }%
 almost surely on $\Lambda_Q$.
 \end{enumerate}
 \end{theorem}
 \begin{proof}
    See \Cref{section:proof_biaspropagation} for the proof.
 \end{proof}
\Cref{thm:asymptotic}-\ref{thm:asymp:item1} formalizes the intuition that $\lim_{n\to \infty} d\left(s_n, \mss \right)=0$ if $\beta=\limsup_{n\to \infty} |\beta_n|\to0 $.
If $(\eta_n)$ encodes the stepsize of ULA, the last condition can be deduced from $\lim_{n\to \infty} |\eta_n|=0 $, since a smaller stepsize decreases bias.
Even though it is the case that $\beta>0$ in practice since $\eta_n=\eta>0$ is fixed, we can quantify the impact of the bias on $V$ in \Cref{thm:asymptotic}-\ref{thm:asymp:item2}. 
Note the impact of the regularity of $V$ according to the dimension encoded in $q=\left(p-d\right) /(p-1)$: at the limit when $p\to \infty$, we recover $q=1$ as in \Cref{thm:asymptotic}-\ref{thm:asymp:item3}.
Therefore, the impact of the bias $\beta$ on $\limsup _{n \rightarrow \infty}\left\|\nabla V\left(s_n\right)\right\|$ is smoothed by the regularity of $V$.
For the case of $\beta=0$, we recover the convergence of the sequence towards stationary points $\mss$, which has already been established by \citet{kuhn2004coupling}. 

Remark that \Cref{thm:asymptotic} is local in the sense that it holds on the event $\Lambda_Q$ and not globally. (\textit{i.e.}, there exists a compact set $Q$ such that $\Lambda_Q$ holds almost surely.)
Previous results have assumed that the sequence $(s_n)$ is bounded to make their result global and have argued that we can bound it by design through ``reinitialization'' as studied by \citet{kuhn2004coupling, andrieu2005stability}.
Here, we did not use the recursion proposed \cite[p.9]{andrieu2005stability} to be more aligned with practical implementations.
 
The constants $K_Q,L_Q$ depend explicitly on the bounds on $|\nabla V| $, $|\nabla^2 V| $ (the maximal singular value of $A$ to be precise) and the Yomdin and Lojasiewicz constants applied to
$V$.
(See, \textit{e.g}, Proposition 8.1 and 8.2 by \citet{tadic2017asymptotic}, which are generalizations of Sard's Theorem.)
The explicit forms of $K_Q,L_Q$ are given at the end of the proof in \Cref{section:proof_biaspropagation}.

\section{NON ASYMPTOTIC ANALYSIS}
\label{section:Non_asymptotic}
For the non-asymptotic analysis, we assume the bias parameters are fixed for any $n\geq 0, \eqsp \eta_n\in (0,\eta_0]$.
Furthermore, we rely on assumptions typical for non-asymptotic analysis of SA.
Our main result is a non-asymptotic high probability convergence guarantee for SAEM with biased MCMC.


\subsection{Technical Assumptions} 
First, we impose assumptions on the MCMC kernel, which are standard in the non-asymptotic analysis of stochastic approximation with state-dependent Markovian noise. 
\begin{assumptionnona}\label{assumption:bounded_update}
The update is bounded by a constant \( 0 < \sigma < \infty\) as
{%
\setlength{\belowdisplayskip}{-2ex} \setlength{\belowdisplayshortskip}{-2ex}
\setlength{\abovedisplayskip}{.ex} \setlength{\abovedisplayshortskip}{.ex} 
\[
   \sup_{(s,z) \in \Rset^d\times \mathcal{Z}} | H\left(s, z\right) - h\left(s\right) | \leq \sigma\eqsp .
\]
}%
\end{assumptionnona}
While this assumption is quite strong, it is necessary to control the properties of the solution to the Poisson equation as in the following assumption:
 
\begin{assumptionnona}\label{assumption:poisson_equation}
For any $s\in \Rset^d$, there exists a solution \(\nu_s^{\eta}: \mcz \to \Rset^d\) to the Poisson equation such that
\[
   \nu_s^{\eta} - \Pi^{\eta}_{s} \nu_s^{\eta}
   =
   S\left(\cdot\right) - \check{s}_{\eta}(\thetahat[s])\eqsp ,
\]
where for any $z\in\mcz$, $\Pi^{\eta}_{s} \nu_s^{\eta}(z)=\int_\mcz \nu_s^{\eta}(z) \Pi^{\eta}_{s}(z,\dz) $ and for any $s\in\Rset^d$, $\check{s}_{\eta}(\thetahat[s])=\int_\mcz S(z)\dd \pi_{s,\eta} $. Moreover, there exist some bound $L_{\nu}^{(0)}, L_{\nu}^{(1)}>0 $ such that
\begin{align}
   &\sup_{(s,z) \in \Rset^d\times \mathcal{Z}}\left\{  | \nu_s^{\eta}(z) |,  | \Pi^{\eta}_{s} \nu_s^{\eta}(z) | \right\} \leq L_{\nu}^{(0)}\eqsp ,\\
   &\sup_{(s,z) \in \Rset^d\times \mathcal{Z}} | \Pi_s^{\eta} \nu_s^{\eta}(z) - \Pi_{s^{\prime}}^{\eta} \nu_{s^{\prime}}^{\eta}(z) | \leq L_{\nu}^{(1)} | s - s' |\eqsp .
\end{align}
\end{assumptionnona}
Remark that \Cref{assumption:bounded_update}-\ref{assumption:poisson_equation} implies \Cref{hyp:reg_kernel}.
Also, under \Cref{assumption:poisson_equation},
\begin{equation}
  e_n=S(Z_{n+1})-\check{s}_{\eta}(\thetahat[s_n])=\nu_s^{\eta}(Z_{n+1}) - \Pi^{\eta}_{s} \nu_s^{\eta}(Z_{n+1})\eqsp ,
\end{equation}
which is crucial for the analysis in the Supplementary material \Cref{section:proof_nonasymptotic}. 
This type of assumption has first been used by \cite{karimi2019non-asymptotic} and has since been standard in the analysis of stochastic approximation with state-dependent Markovian noise.
See, \textit{e.g}, the works of~\citet[Assumption 3.7]{Alacaoglu2023convergence},~\citet[Assumption 2.4]{Roy2022constrained} for some recent examples.
If we assume that \textbf{(i)} both \(\Pi_s^{\eta}\), \(H\left(s, z\right)\) are uniformly Lipschitz with respect to \(s\) for any \(z \in \mathcal{Z}\), \textbf{(ii)} \(\Pi\) is uniformly geometrically ergodically converging, and \textbf{(iii)} \cref{assumption:bounded_update} holds, \citet[Lemma 7]{karimi2019non-asymptotic} establish \(L_{\nu}^{(0)}, L_{\nu}^{(1)}\) explicitly.

 In this work, we are particularly interested in asymptotically biased MCMC algorithms:
 \begin{assumptionnona}\label{assumption:asymptotic_bias}
 The asymptotic bias of the MCMC kernel is bounded for some \(0 \leq \tau_0, \tau_1 < \infty\) as
 \[
   \abs{\beta_k}^2 \leq \tau_0 + \tau_1 \abs{h\left(s_k\right)}^2\eqsp .
 \]
 Moreover, under \Cref{hyp:regularity_loss}, $C_{b_1}\triangleq \lambda_M \left( \frac{1}{2} \sqrt{\tau_0}  + \sqrt{\tau_1} \right)<\lambda_m $, where $\lambda_m,\lambda_M>0 $ are defined in \Cref{hyp:regularity_loss}.
 \end{assumptionnona}
 This assumption has been used by \cite{Dieuleveut2023stochastic}, and encompasses both iterate-dependent and -independent bias.
  It is a refinement of \Cref{hyp:bias_control}. 
 The condition about $ C_{b_1}$ is specific to the SAEM framework; the bias is bounded by the matrix conditioning of $A$ given in \Cref{hyp:regularity_loss}. This reveals crucial in the high probability bound.
 
 The remaining assumptions are standard in the non-asymptotic analysis of stochastic approximation.
 (See H1-2 in the recent review by \cite{Dieuleveut2023stochastic}.)
 \begin{assumptionnona}
  \label{assumption:smoothness}
 The Lyapunov function \(V\) is smooth and bounded below such that
 \begin{align}
   \abs{ \nabla V\left(s\right) - \nabla V\left(s^{\prime}\right) } &\leq L_V \,\abs{ s - s^{\prime} }
   \qquad
   V\left(s\right) \geq V^*
 \end{align}
 for \(V^* = \inf_{s \in \Rset^d} V\left(s\right) > - \infty\) and some \(0 < L_V < \infty\).
 \end{assumptionnona}
 
 
 For the EM setting, \cref{assumption:smoothness} is problem-dependent and can not be recovered directly from \Cref{hyp:regularity_loss}.
 
\newpage
\subsection{Main Result}

 \begin{theorem}\label{thm:nonasymptotic}
 Assume \Cref{assumption:exponential_family}-\ref{hyp:regularity_loss},
 \cref{assumption:bounded_update}-\ref{assumption:asymptotic_bias} and
 \Cref{assumption:smoothness},
 
 Then, given a stepsize satisfying with $\alpha_1, \alpha_2>0$,
 \begin{align}
  \label{eq:stepsize_condition}
   &\gamma_{k+1} \leq \gamma_k,
   \quad
   \gamma_{k} \leq \alpha_1 \gamma_{k+1},
   \\
   &\gamma_{k} - \gamma_{k+1} \leq \alpha_2 \gamma_{k+1}^2,
   \quad
   \gamma_{0} \leq \frac{1}{2} \left( \lambda_m - C_{b_1} \right) / C_{n_1},
 \end{align}
 with probability at least \(1 - \delta\), the MCMC-SAEM algorithm guarantees that
 \begin{align}
   &\min_{k=0, \ldots, n} \abs{h\left(s_{k}\right)}^2
   \leq \frac{2}{\lambda_m - C_{b_1}} \times
   \nonumber
   \\
   &
   \left(
   \frac{
     V(s_0) - V^* + C_0 + \log \frac{1}{\delta} + C_{n_2} \sum^n_{k=0} \gamma_{k+1}^2  
   }{
     \sum_{k=0}^n \gamma_{k+1}
   }
   +
   C_{b_2}
   \right),
  \label{eq:results_high_proba}
 \end{align}
 where the constants are 
 \begin{align}
   C_0      
   &= L_{\nu}^{(0)} \left( \gamma_0 + 2 \lambda_M \right),
   \qquad
   \\
   C_{b_1}
   &= 
   \lambda_M \left( \frac{1}{2} \sqrt{\tau_0}  + \sqrt{\tau_1} \right), \\
   C_{b_2}
   &= 
   \frac{1}{2} \lambda_M \sqrt{\tau_0},
   \\
   C_{n_1}
   &= 
   L_{\nu}^{(1)} \lambda_M \sigma + L_V L_{\nu}^{(0)} \lambda_M \left(1+\sigma\right) + L_V \sigma^2,
   \\
   C_{n_2}
   &= 
   (2\lambda_M L_{\nu}^{(0)})^2 
   + L_{\nu}^{(1)} \lambda_M \left(\frac{1}{2} + \alpha_1 \sigma + \alpha_1 \frac{1}{2} \right) \\
   &\qquad\quad+ L_{\nu}^{(0)} \lambda_M \left( L_V + \alpha_2 + 1\right)
   + L_V.
 \end{align}
 \begin{proof}
     See \Cref{section:proof_nonasymptotic} for the proof.
 \end{proof}
 
 \end{theorem}
If we set $\gamma_{k}=\left( \lambda_m - C_{b_1} \right)/2C_{n_1}\sqrt{k}$ for any $k\geq 1$, then the stepsize satisfies \eqref{eq:stepsize_condition} with $\alpha_1,\alpha_2=\sqrt{2},(\sqrt{2}-1)/\sqrt{2}\gamma_1$, and thus the right hand in \eqref{eq:results_high_proba} becomes $\mathcal{O}(\log(n)/n+2C_{b_2}/\left(\lambda_m - C_{b_1}\right)) $.
The results are coherent with \Cref{thm:asymptotic}-\ref{thm:asymp:item3}, the bias is proportional to $\lambda_M \sqrt{\beta} $ as $n\to \infty$.

The key step in establishing the high-probability bound is to ensure the non-asymptotic transient bias of the Markov chain concentrates.
This is done by constructing a Martingale following the strategy of \citet{karimi2019non-asymptotic}, while the concentration inequality is by~\citet[Lemma 1]{Li2020Ahigh}.
Although this Lemma relied on Gaussian tails, this automatically follows from \cref{assumption:poisson_equation}.
Therefore, with this set of assumptions, SA with MCMC is well-behaved under a concentration perspective.


Overall, as in \Cref{section:asymptotic}, we conclude that the regularity of the problem determines the limit of biased MCMC within SAEM.
 

\newpage
\vspace{-1ex}
\section{EXPERIMENTS}
\vspace{-1ex}
\label{section:experiments}

We will now empirically compare the performance of MCMC-SAEM with approximate and asymptotically exact MCMC kernels.
In particular, we compare ULA and MALA.
While the computation cost is comparable--MALA is slightly more expensive as it requires an additional evaluation of the unnormalized target density--their practical behavior can be different, as we will see in the experiments.
All experiments were implemented in the Julia language~\citep{Bezanson2017julia}.
For the stepsize, we use \(\gamma_k = 1 / \sqrt{k}\) for all experiments.
Furthermore, in the E-step of each SAEM iteration, 4 MCMC steps are performed as burn-in unless stated otherwise.
The source code used for the experiments is publically available online\footnote{Link to \textsc{GitHub} repository: \url{https://github.com/Red-Portal/MCMCSAEM.jl}}.
 
\vspace{-1ex}
\subsection{Logistic Regression on a Synthetic Dataset}
To illustrate the difference in the behavior of ULA and MALA, we first consider a toy problem.

 
 \begin{figure}
     \centering
     \vspace{-1ex}
     \includegraphics{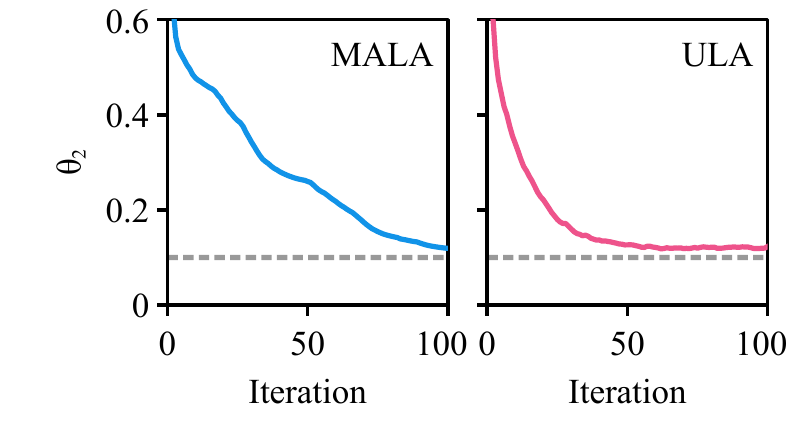}
     \vspace{-2ex}
     \caption{
     \textbf{Trajectory of the MCMC-SAEM iterates for \(\theta_2\) with a large MALA/ULA stepsize of \(\eta = 5 \times 10^{-3}\).}  
     MALA only makes ``occasional'' progress due to rejections, while ULA makes progress nonetheless, albeit with some asymptotic bias.
     The dotted line marks the true value \(\theta_2^*\).
     } 
     \label{fig:logistic_trajectory}
\vspace{-2ex}
 \end{figure}

 
\vspace{-1ex}
\paragraph{Model}
 The model is a typical logistic regression model with a Gaussian prior on the coefficients:
{%
\setlength{\belowdisplayskip}{1ex} \setlength{\belowdisplayshortskip}{1ex}
\setlength{\abovedisplayskip}{1ex} \setlength{\abovedisplayshortskip}{1ex} 
 \begin{align}
    \beta &\sim \mathcal{N}\left(\mu \mathbf{1}_d, \sigma^2 \mathbf{I}_d\right)
    \\
    p_i &= \operatorname{logistic}\left({\beta^{\top}}{x_i}\right)\eqsp \\
    y_i   &\sim \mathsf{Bernoulli}\left(p_i\right)\eqsp . 
 \end{align}
}%
 We optimize for the hyperparameters \(\theta = (\mu, \sigma)\in \Rset\times \Rset_{>0}\).
 The number of datapoints is 1000, while the dimensionality of \(\beta\) is \(100\).
 The regression matrix is randomly generated to have a condition number of \(\kappa = 1000\).
 We initially run the respective MCMC algorithm for 10 iterations as burn-in, and then run MCMC-SAEM for \(100\) iterations.
 The true parameter is \(\theta^* = (1, 0.1)\), while we initialize the algorithm with \(\theta_0 = (0, 1)\).
 The MCMC chain is initialized from a standard Gaussian.
 
\vspace{-1ex}
\paragraph{Results}
For a small \(\eta\) (ULA/MALA stepsize), ULA and MALA perform similarly since MALA reduces to ULA.
On the other hand, they start to behave differently in the large \(\eta\) regime, as illustrated in \Cref{fig:logistic_trajectory}: MALA starts to reject more proposals, which results in slower convergence.
On the other hand, ULA does not reject anything, making constant progress.
This illustrates that, in the large-\(\eta\) regime, the asymptotic bias of ULA becomes less critical since MALA suffers from non-asymptotic transient bias from the rejections.

\subsection{Pharmacokinetics}
A popular application of MCMC-SAEM is nonlinear mixed modeling, which often arises in longitudinal studies with nonlinear models of progression.
Here, we will consider modeling the pharmacokinetics of Theophylline, which is a drug for respiratory diseases~\citep{Davidian1995nonlinear,Pinheiro1995approximations}.

\paragraph{Model}
Following the formulation of \citet{kuhn2005maximum}, we assume the concentration of the drug on the \(i\)th patient at the \(j\)th measurement can be modeled as
{%
\setlength{\belowdisplayskip}{1ex} \setlength{\belowdisplayshortskip}{1ex}
\setlength{\abovedisplayskip}{1ex} \setlength{\abovedisplayshortskip}{1ex} 
\begin{align}
  \log V_i &\sim \mathcal{N}\left(\mu_{V}, \sigma_V^2\right) \\
  \log \mathrm{ka}_i &\sim \mathcal{N}\left(\mu_{\mathrm{ka}}, \sigma_{\mathrm{ka}}^2\right) \\
  \log \mathrm{Cl}_i &\sim \mathcal{N}\left(\mu_{\mathrm{Cl}}, \sigma_{\mathrm{Cl}}^2\right) \\
  y_{ij}   &\sim \mathcal{N}\left(h\left(V_i, \mathrm{Cl}_i, \mathrm{ka}_i, t_{ij}\right), \sigma^2\right),
\end{align}
}%
where \(h\) is a first-order one-compartment model:
{%
\setlength{\belowdisplayskip}{1ex} \setlength{\belowdisplayshortskip}{1ex}
\setlength{\abovedisplayskip}{1ex} \setlength{\abovedisplayshortskip}{1ex} 
\[
   h\left(V_i, \mathrm{Cl}_i, \mathrm{ka}_i, t_{ij}\right)
   \triangleq
   \frac{d_i \mathrm{ka}_i }{ V_i \left(\mathrm{ka}_i - \mathrm{Cl}_i\right) } \left( \mathrm{e}^{-\frac{\mathrm{Cl}}{V_{i}} t_{ij}} - \mathrm{e}^{- \mathrm{ka}_i t_{ij}} \right),
\]
}%
and for each of the \(i\)th patient,
\begin{center}
  \vspace{-2ex}
  {\begingroup
  \setlength\tabcolsep{1ex} 
  \begin{tabularx}{1.1\textwidth}{lX}
    \(y_{ij}\)        & is the concentration of the drug at \(t_{ij}\) (\textrm{mg/L}), \\
    \(t_{ij}\)        & is the time of the \(j\)th measurement (\textrm{hours}), \\
    \(d_i\)           & is the administered dosage (\textrm{mg}/\textrm{kg}), \\
    \(\mathrm{ka}_i\) & is the drug absorption rate, \\
    \(\mathrm{Cl}_i\) & is the drug's clearance, and \\
    \(V_i\)           & is the volume of the central compartment. \\
  \end{tabularx}
  \endgroup}
  \vspace{-1ex}
\end{center}
\(z_i = (\log V_i, \log \mathrm{ka}_i, \log \mathrm{Cl}_i) \in \mathbb{R}^{3}\) for \(i = 1, \ldots, n\) are the latent variables local to each patient sampled using MCMC, while the hyperparameters \(\theta = (\mu_{\mathrm{ka}}, \mu_{V}, \mu_{\mathrm{Cl}}, \sigma_{\mathrm{ka}}, \sigma_V, \sigma_{\mathrm{CL}}, \sigma) \in \mathbb{R}^3  \times \mathbb{R}_{>0}^4\) are inferred by maximizing the marginal likelihood.

 \begin{figure}
     \centering
     \includegraphics{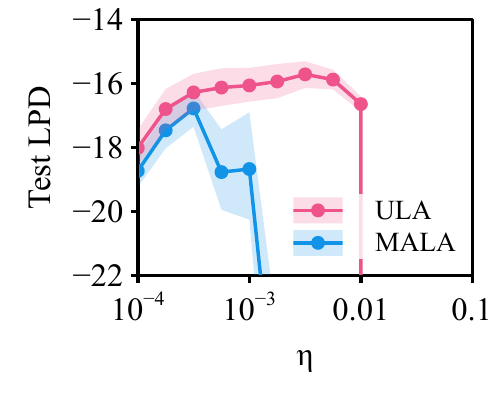}
     \vspace{-1ex}
     \caption{
     \textbf{Test average marginal log-predictive density (LPD) for the pharmacokinetics model versus the MALA/ULA stepsize \(\eta\).}
     The colored bands are 80\% bootstrap confidence intervals of the mean computed from \(32\) independent train-test splits of a ratio of \(9:3\).
     }
     \label{fig:pharma}
\vspace{-1ex}
 \end{figure}

\paragraph{Dataset}
We use the Theophylline dataset preprocessed and distributed by the \texttt{saemix} package~\citep{Comets2017parameter}, which is based on the one originally distributed by \texttt{NONMEM}~\citep{Boeckmann1994nonmem}.
This dataset contains 12 patients who were administered oral doses of the drug, where the concentration was measured 11 times over 25 hours.
(The \texttt{saemix} version excludes the measurement at \(t=0\), leaving 10 measurements per patient.)

We initialize at \(\theta_0 = (-1, 0, 0, 1, 1, 1, 1)\), run MCMC-SAEM for 1000 iterations after 100 initial burn-in MCMC steps.
We randomly split the patients into a train and test set of \(9:3\).
Then, we estimate the marginal log-predictive density of the test patients resulting from the hyper-parameters found by SAEM.
For estimating the test marginal LPD, we use the average of 100 importance weights drawn using annealed importance sampling (\citealp{Neal2001annealed}), each using 1000 annealing steps with a quadratic schedule.

\paragraph{Results}
The results are shown in \Cref{fig:pharma}.
We can see that ULA converges for the widest range of stepsizes.
In this example, MALA struggles the most because the likelihood is highly non-smooth.
Misspecifications of the hyper-parameters result in a sudden increase in the rejection rate.
Since ULA is immune to this problem, it makes constant progress as long as it doesn't diverge.

\begin{table}[t]
\caption{POISSON GLM DATASETS}\label{table:count_data}
\vspace{-2ex}
\begin{center}
\begin{tabular}{lrrr}
\textbf{NAME}  & \# data & \(\operatorname{dim}(z)\) & \(\operatorname{dim}(\theta)\)\\
\hline
\textsf{medpar} & 1495 & 1495 & 6 \\
\textsf{azpro}  & 3589 & 3589 & 4 \\
\end{tabular}
\end{center}
\vspace{-3ex}
\end{table}

\vspace{-1ex}
\subsection{Robust Poisson Regression}
\vspace{-1ex}
Our first realistic experiment is a generalized linear model (GLM) with a Poisson likelihood.
In particular, we consider a robustified, or ``localized'' \citep{Wang2018general}, Poisson regression model, also known as the Poisson-log-normal model~\citep[\S 4.2.4]{Cameron2013regression}.
The model is described as follows:
{%
\setlength{\belowdisplayskip}{1.5ex} \setlength{\belowdisplayshortskip}{1.5ex}
\setlength{\abovedisplayskip}{1.5ex} \setlength{\abovedisplayshortskip}{1.5ex} 
\begin{align}
   \eta_i &\sim \mathcal{N}\left( \beta^{\top}x_i + \beta_0, \sigma\right) \\
    y_i   &\sim \mathsf{Poisson}\left( \exp\left(\eta_i\right) \right).
\end{align}
}%
The hyper-parameters are \(\theta = (\beta, \beta_0, \sigma) \in \mathbb{R}^{d} \times \mathbb{R} \times \Rset_{>0} \).
Unlike the popular negative-binomial regression model, this model is non-conjugate and intractable.
We will apply MCMC-SAEM to perform maximum-likelihood inference of the regression coefficients while marginalizing the local response \(\eta_i\).
We run MCMC-SAEM for 100 iterations after 10 initial burn-in steps.
The considered datasets are shown in~\Cref{table:count_data} and were obtained from the \texttt{COUNT} package in R~\citep{Hilbe2016count}.

 \begin{figure}
     \centering
     \subfloat[\textsf{azpro}]{
       \includegraphics{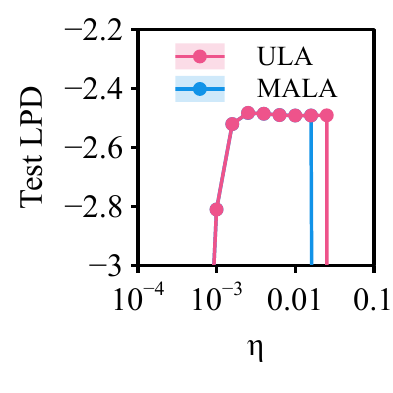}
       \vspace{-2ex}
     }
     \subfloat[\textsf{medpar}]{
       \includegraphics{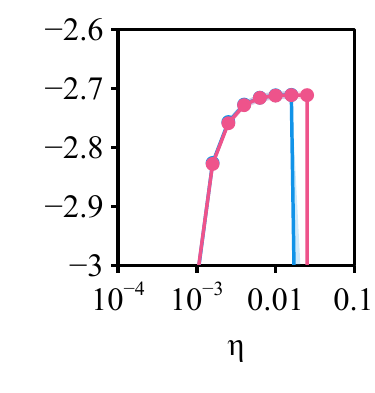}
       \vspace{-2ex}
     }
     \vspace{-1ex}
     \caption{
     \textbf{Test average log-predictive density (LPD) for robust Poisson regression versus the MALA/ULA stepsize \(\eta\).}
     ULA is more robust against the choice of stepsize on \textsf{azpro}.
     The colored bands are 80\% bootstrap confidence intervals of the mean computed from \(32\) independent train-test splits of a ratio of \(8:1\).
     }
     \label{fig:poisson}
\vspace{-1ex}
 \end{figure}

\begin{figure*}
\centering
\subfloat[\textsf{german}]{
  \includegraphics[scale=0.95]{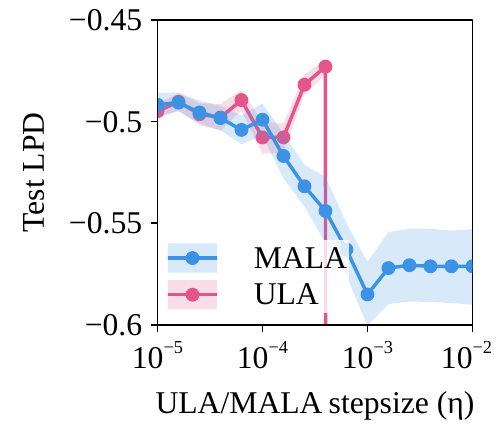}
  \vspace{-2.5ex}
}
\subfloat[\textsf{phishing}]{
  \includegraphics[scale=0.95]{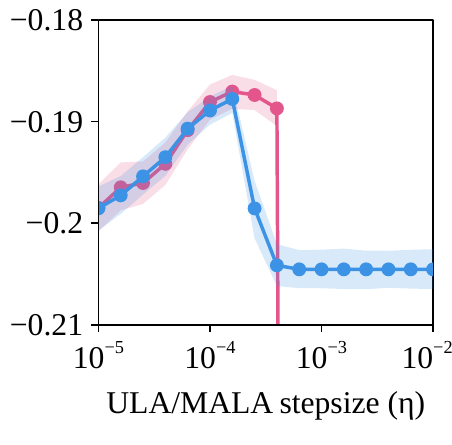}
  \vspace{-2.5ex}
} 
\subfloat[\textsf{caravan}]{
  \includegraphics[scale=0.95]{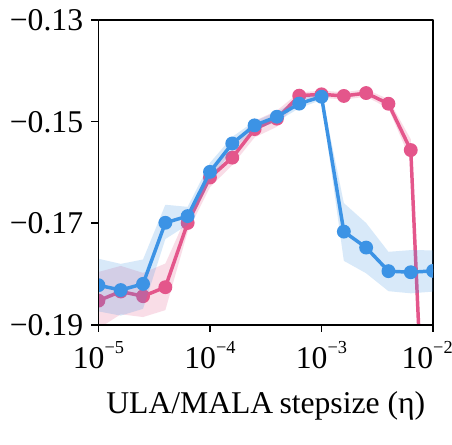}
  \vspace{-2.5ex}
} 
\caption{
  \textbf{Test average log-predictive density (LPD) for logistic regression with automatic relevance determination versus MALA/ULA stepsize \(\eta\).
  }
  The colored bands are 80\% bootstrap confidence intervals of the mean computed from \(32\) independent train-test splits of a ratio of \(8:1\).
}\label{fig:logisticard}
\vspace{-2ex}
\end{figure*}

\vspace{-1ex}
\paragraph{Results}
The results are shown in~\Cref{fig:poisson}.
As expected, ULA converges for a wider range of stepsizes.
However, in this example, the difference between the two methods is very small.
This is because both methods mix quickly for this posterior; it is strongly log-concave and factorizes into univariate posteriors.

\subsection{Logistic Regression with Automatic Relevance Determination}
Automatic relevance determination (ARD; \citealp{Neal1998bayesian,Mackay1996bayesian}) is prior on regression coefficients, where each regressor \(\beta_i\) is assigned its own scale parameter \(\gamma_i\).
When maximizing the marginal likelihood with respect to the relevance parameters \(\gamma_i\), the ARD prior is known to have a sparsifying effect, where irrelevant features are pruned as \(\gamma_i \to \infty\).
This ``shrinkage'' effect is lost if one does fully Bayesian inference. Therefore the empirical Bayes version of the problem is especially relevant.

Unfortunately, solving the maximum marginal likelihood problem is challenging,
even for linear regression models~\citep{Tipping2001sparse,Wipf2007new}.
Here, we demonstrate that MCMC-SAEM can be used to solve the ARD problem for logistic regression with Bernoulli likelihoods.

The model is described as:
{%
\setlength{\belowdisplayskip}{1ex} \setlength{\belowdisplayshortskip}{1ex}
\setlength{\abovedisplayskip}{1ex} \setlength{\abovedisplayshortskip}{1ex} 
\begin{align}
   \beta_0 &\sim \mathcal{N}\left(0, 10\right) \quad\beta \sim \mathcal{N}\left(0, \mathbf{\gamma}^{-1}\right) \\
   p_i &= \operatorname{logistic}\left({\beta^{\top}}{x_i} + \beta_0\right) \\
   y_i  &\sim \mathsf{Bernoulli}\left(p_i\right),
\end{align}
}%
where \(\mathbf{\gamma} = (\gamma_1, \gamma_2, \ldots, \gamma_d)\in \Rset_{>0}^{d}\).

\begin{table}[t]
\caption{LOGISTIC REGRESSION DATASETS}\label{table:logistic_data}
\vspace{-2ex}
\begin{center}
\begin{tabular}{lrrr}
\textbf{NAME}  & \# data & \(\operatorname{dim}(z)\) & \(\operatorname{dim}(\theta)\)\\
\hline
\textsf{phishing} & 11054 & 68 & 69 \\
\textsf{german}   & 1000 & 217 & 216 \\
\textsf{caravan}  & 9822 & 620 & 619 \\
\end{tabular}
\end{center}
\vspace{-3ex}
\end{table}

We optimize the hyperparameters \(\theta = \gamma\) using MCMC-SAEM for 2000 iterations, after 100 burn-in steps, starting from an initial point of \(\theta_0 = \left(1, \ldots, 1\right)\).
On this problem, the result was quite sensitive to the initial point.
After running MCMC-SAEM, we test the quality of the hyperparameters by estimating the log-predictive density (LPD) on a held-out test dataset using samples from the posterior.
The samples are separately drawn using 2000 steps of MALA after 2000 adaptation/burn-in steps.
MALA is automatically tuned to target an acceptance rate of \(0.57\) \citep{Roberts2001optimal} using Nesterov's dual averaging procedure~\citep{Nesterov2009primaldual}.
We replicate this over 32 independent train-test splits.
The datasets were obtained from the UCI repository~\citep{Dua2017uci} and are shown in \Cref{table:logistic_data}.
The categorical variables were one-hot encoded, while the continuous features were z-standardized.

Furthermore, for this problem, preconditioning MALA and ULA is crucial since the scale of the posterior greatly varies depending on whether a feature is pruned or not.
We use a diagonal preconditioner \(P\) where the diagonal is set as \(P_{ii} = \nicefrac{1}{\left(\gamma_i^2 + 0.01\right)} + \delta\) for \(\beta\) and \(1\) for \(\beta_0\).
\(\delta = 2 \times 10^{-16} > 0\) is necessary to ensure that the MCMC chain is not reducible even when a feature is pruned by \(\gamma_i \to \infty\).

\vspace{-1ex}
\paragraph{Results}
The results are shown in \Cref{fig:logisticard}.
We can see that ULA converges to a high quality solution for the widest range of stepsizes.
On \textsf{german}, only ULA achieves the highest level of accuracy.
Notably, in the large \(\eta\) regime, the MALA chain tended to reduce to a state with an acceptance rate close to 0.
ULA, on the other hand, is immune to this issue since it always makes progress as long as it does not diverge.
Furthermore, when \(\eta\) was too large, ULA immediately diverged at the initial SAEM iterations, which is easier to diagnose and correct.

\section{DISCUSSIONS}
In this work, we theoretically and empirically studied the impact of approximate MCMC algorithms.
The theory suggests that they are feasible for SAEM in high dimensions as soon as the marginal log-likelihood is smooth enough.
That is, the asymptotic bias will have a minimal effect on the found solution.
We empirically confirmed this fact on multiple statistical problems.
Furthermore, in our experiments, we observe that ULA versus MALA represents a trade-off between asymptotic bias versus non-asymptotic bias, where the latter can be more significant on high-dimensional and poorly conditioned problems. 
That is, with large stepsizes, MALA converges slower than ULA due to rejections, incurring a large non-asymptotic transient bias.
As a result, ULA converges faster on these problems.

On a different note, this work provides a clear use case of approximate MCMC algorithms in statistics.
While recent works~\citep{akyildiz2023interacting,kuntz2023particle,Valentinde2019efficient} in empirical Bayes estimation leveraging approximate MCMC didn't explore the \textit{benefits} of approximate MCMC methods over exact MCMC, we showed here that being approximate can, in fact, be better.

\newpage
\subsubsection*{Acknowledgements}
The authors would like to thank the anonymous reviewers for their critical and constructive feedback.

K. Kim was supported by a gift from AWS AI to Penn Engineering's ASSET Center for Trustworthy AI, J. R. Gardner was supported by NSF award [IIS-2145644].
Part of the work of A. O. Durmus is funded by the European Union (ERC, Ocean, 101071601).
Views and opinions expressed are, however, those of the author(s) only and do not necessarily reflect those of the European Union or the European Research Council Executive Agency.
Neither the European Union nor the granting authority can be held responsible for them.

\medskip
\bibliographystyle{plainnat}
\bibliography{bibliography}

\clearpage
\section*{Checklist}

 \begin{enumerate}
 \item For all models and algorithms presented, check if you include:
 \begin{enumerate}
   \item A clear description of the mathematical setting, assumptions, algorithm, and/or model.  \\
   Yes.
   \item An analysis of the properties and complexity (time, space, sample size) of any algorithm.  \\
   No. But we present a non-asymptotic convergence guarantee.
   
   \item (Optional) Anonymized source code, with specification of all dependencies, including external libraries. \\
   Yes.
 \end{enumerate}

 \item For any theoretical claim, check if you include:
 \begin{enumerate}
   \item Statements of the full set of assumptions of all theoretical results. \\
   Yes.
   \item Complete proofs of all theoretical results. [Yes/No/Not Applicable] \\
   Yes.
   \item Clear explanations of any assumptions. [Yes/No/Not Applicable] \\
   Yes.
 \end{enumerate}

 \item For all figures and tables that present empirical results, check if you include:
 \begin{enumerate}
   \item The code, data, and instructions needed to reproduce the main experimental results (either in the supplemental material or as a URL). \\
   Yes. The link to the repository is disclosed in \Cref{section:experiments}.
   
   \item All the training details (e.g., data splits, hyperparameters, how they were chosen). [Yes/No/Not Applicable] \\
   Yes.
   
   \item A clear definition of the specific measure or statistics and error bars (e.g., with respect to the random seed after running experiments multiple times). \\
    Yes. See the main text.
    
   \item A description of the computing infrastructure used. (e.g., type of GPUs, internal cluster, or cloud provider). \\
   Yes. In \Cref{section:resources}.
 \end{enumerate}

 \item If you are using existing assets (e.g., code, data, models) or curating/releasing new assets, check if you include:
 \begin{enumerate}
   \item Citations of the creator If your work uses existing assets. \\
   Yes.
   \item The license information of the assets, if applicable. \\
   Not applicable.
   \item New assets either in the supplemental material or as a URL, if applicable.  \\
   Not applicable.
   
   \item Information about consent from data providers/curators. \\
   Not applicable.
   
   \item Discussion of sensible content if applicable, e.g., personally identifiable information or offensive content. \\
   Not applicable.
 \end{enumerate}

 \item If you used crowdsourcing or conducted research with human subjects, check if you include:
 \begin{enumerate}
   \item The full text of instructions given to participants and screenshots.  \\
   Not applicable.
   
   \item Descriptions of potential participant risks, with links to Institutional Review Board (IRB) approvals if applicable. \\
   Not applicable.
   
   \item The estimated hourly wage paid to participants and the total amount spent on participant compensation.  \\
   Not applicable.
 \end{enumerate}
 \end{enumerate}

 \appendix

\onecolumn
{\hypersetup{linkcolor=black}
\tableofcontents
}

\newpage

\onecolumn
\newpage
\section{TECHNICAL ASSUMPTIONS ON THE MARKOV KERNEL}
\label{section:controling_the_fluctuations}
In this section, we give conditions that imply \Cref{hyp:noise_bias_control} in terms of a bound from below of the Markov kernel on a small set and a drift condition toward this small set (see \citep{nummelin1991poisson} for the definitions and main results). 
It offers some insights regarding what properties are essential for the state-dependent Markov kernels to have convergence guarantees. These conditions are verified for ULA \citep{Valentinde2019efficient}.

 Define, for \(\lyapD: \mcz \rightarrow[1, \infty)\) and \(g: \mcz \rightarrow \mathbb{R}^{d}\) the norm
  \[
  \|g\|_{\lyapD}=\sup _{x \in \mcz} \frac{|g(x)|}{\lyapD(x)}
  \]
  
 Denote, for \(\lyapD: \mcz \rightarrow[1, \infty), \mathcal{L}_{V}:=\left\{g: \mcz \rightarrow \mathbb{R}^{d}, \sup _{z \in \mcz}\|g\|_{\lyapD}<\infty\right\}\).
 \begin{assumptionnonsup}
\label{hyp:DRI}
  For any \(s\in \Rset^d, \Kerpi[s][\eta]\) is \(\psi\)-irreducible and aperiodic \footnote{We use in this article the standard terminology and the notations introduced in \citep[Chapter 4,5]{nummelin1991poisson}}. 
  In addition there exist a function \(\lyapD: Z \rightarrow\) \([1, \infty)\), a constant \(l_c \geq 2\)  such that for any compact subset \(\mathcal{K} \subset \Rset^d\),
 \begin{enumerate}
 
     \item (DRI1) there exist an integer \(m\), constants \(0<\lambda<1, b, \kappa, \delta>0\) and a probability measure \(\nu\) such that
 
 \[
 \begin{array}{ll}
 \sup _{s \in \mathcal{K}} (\Kerpi[s][\eta])^{m} \lyapD^{l_c}(z) \leq \lambda \lyapD^{l_c}(z)+b \mathds{1}_{\mathrm{C}}(z), & \\
 \sup _{s \in \mathcal{K}} \Kerpi[s][\eta] \lyapD^{l_c}(z) \leq \kappa \lyapD^{l_c}(z) & \forall z \in \mcz \\
 \inf _{s \in \mathcal{K}} (\Kerpi[s][\eta])^{m}(z, A) \geq \delta \nu(A) & \forall z \in \mathrm{C}, \quad \forall A \in \mathcal{B}(\mcz) .
 \end{array}
 \]
 
 \item (DRI2) $\left\|S(.)\right\|_{\lyapD} <\infty $.
 
 \item (DRI3) there exists \(C\) such that, for all $\left((s,\eta), (s^{\prime},\eta')\right) \in (\mathcal{K}\times ]0,\eta_0])^2$
 \[
 \begin{array}{ll}
 \left\|\Kerpi[s][\eta] g-\Kerpi[s'][\eta'] g\right\|_{\lyapD} \leq C\|g\|_{\lyapD}\left|(s,\eta)-(s^{\prime},\eta')\right| & \forall g \in \mathcal{L}_{\lyapD}, \\
 \left\|\Kerpi[s][\eta] g-\Kerpi[s'][\eta'] g\right\|_{\lyapD^{l_c}} \leq C\|g\|_{\lyapD^{l_c}}\left|(s,\eta)-(s^{\prime},\eta')\right|, & \forall g \in \mathcal{L}_{\lyapD^{l_c}}
 \end{array}
 \]
 \end{enumerate}
\end{assumptionnonsup}
  Assumption (DRI1) is classical in the Markov chain literature; it implies the existence of a stationary distribution \(\pi_{s,\eta}\) for all \(s,\eta \in \Rset^d\times ]0,\eta_0]\) and \(\lyapD^{l_c}\)-uniform ergodicity, i.e. for each \(s,\eta \in \Rset^d\times ]0,\eta_0]\) there exist constants \(C_{s,\eta}<\infty\) and \(\gamma_{s,\eta} \in[0,1)\), such that for any function \(f \in \mathcal{L}_{\lyapD^{l_c}}\) and any integer \(k>0\)
 \[
 \left\|(\Kerpi[s][\eta])^{k} f-\pi_{s,\eta}(f)\right\|_{\lyapD^{l_c}} \leq C_{s,\eta} \gamma_{s,\eta}^{k}\|f\|_{\lyapD^{l_c}} .
 \]
 
 Note that the constants \(C_{s,\eta}\) and \(\gamma_{s,\eta}\) may be bounded over the compact sets of \(\Rset^d\), i.e. for each \(\mathcal{K} \subset \Rset^d\), there exists \(\bar{C}<\infty\) and \(\bar{\gamma} \in[0,1)\), such that \(\sup _{s \in \mathcal{K}\times ]0,\eta_0]} C_{s,\eta} \leq \bar{C}\) and \(\sup _{s,\eta \in \mathcal{K}\times ]0,\eta]} \gamma_{s,\eta} \leq \bar{\gamma}\). 
 The regularity of the kernels \(s,\eta \rightarrow \Kerpi[s][\eta]\) expressed in \(\lyapD\) and \(\lyapD^{l_c}\) norm is naturally less classical in (DRI3). 
 These conditions can be weakened by considering subgeometric ergodicity conditions \citep{debavelaere2021convergence} .
 (DRI2) is just a control on the summary statistic depending on the drift of the Markov kernel.

By \Cref{hyp:DRI}, we can control the solution of the Poisson equation related to the Markov kernels $\{\Pi_s^\eta,\ee s,\eta\in \Rset^d\times ]0,\eta_0]\} $ which are helpful to control the markovian noise $(e_n)$, these controls are presented in the following assumption.
\begin{assumptionnonsup}
    \label{hyp:A3}
  For any \(s\in \Rset^d\), the Poisson equation \(\poinu[s][\eta]-\Kerpi[s][\eta] \poinu[s][\eta]=S(.)-\pi_{s,\eta}\left(S(.)\right)\) $\pi_{s,\eta}\left(S(.)\right)=\check{s}_\eta(s)$ has a solution \(\poinu[s][\eta]\). 
   There exist a function \(W: \mcz \rightarrow[1, \infty]\) such that \(\{x \in \mcz, W(x)<\infty\} \neq \emptyset\),
    constants \(\alpha \in(0,1], l_c \geq 2\) such that for any compact subset \(\mathcal{K} \subset \Rset^d\), by denoting $\mathcal{K}'=\mathcal{K}\times ]0,\eta_0]$ the following holds:
  \[
  \begin{aligned}
  & \left\|S(.)\right\|_{W}<\infty,  \sup _{(s,\eta)\in \mathcal{K}'}\left(\left\|\poinu[s][\eta]\right\|_{W}+\left\|\Kerpi[s][\eta] \poinu[s][\eta]\right\|_{W}\right)<\infty, \\
  & \sup _{\left((s,\eta), (s',\eta')\right) \in \mathcal{K}'}\left|\theta-\theta^{\prime}\right|^{-\alpha}\left\{\left\|\poinu[s][\eta]-\poinu[s'][\eta']\right\|_{W}+\left\|\Kerpi[s][\eta] \poinu[s][\eta]-\Kerpi[s'][\eta'] \poinu[s'][\eta']\right\|_{W}\right\}<\infty .
  \end{aligned}
  \]
\end{assumptionnonsup}
We stress here the fact that the function \(W\) is global but that the bounds in the previous equations depend on the particular compact \(\mathcal{K}\) under consideration.
\begin{lemma}
    \label{lemma:DRI_A3}
    Assume \Cref{hyp:DRI}. Then \Cref{hyp:reg_kernel}, \Cref{hyp:A3} are satisfied.
\end{lemma}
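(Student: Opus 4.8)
The plan is to construct the Poisson solution explicitly as a geometric series and then read off both its boundedness and its regularity from \Cref{hyp:DRI}. Throughout I write $\theta=(s,\eta)$, $\theta'=(s',\eta')$ and abbreviate $\Pi_\theta:=\Kerpi[s][\eta]$, $\pi_\theta:=\pi_{s,\eta}$, the parameters ranging over $\mathcal{K}'=\mathcal{K}\times\,]0,\eta_0]$. First I record that (DRI1) is precisely the classical $m$-step drift/minorization pair, so by the standard Meyn--Tweedie theory it yields a unique $\pi_\theta$ together with $\lyapD^{l_c}$-geometric ergodicity, with constants $\bar C,\bar\gamma$ uniform over $\mathcal{K}'$ since $\lambda,b,\kappa,\delta,m,\nu$ are. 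Because $\lyapD\ge 1$ and $l_c\ge 2$ one has $\lyapD\le\lyapD^{l_c}$, so by concavity of $x\mapsto x^{1/l_c}$ the drift for $\lyapD^{l_c}$ descends to a drift for $\lyapD$ with rate $\lambda^{1/l_c}<1$; hence geometric ergodicity holds simultaneously in the $\lyapD$- and $\lyapD^{l_c}$-norms, and integrating the drift against $\pi_\theta$ gives the uniform bound $\sup_{\theta\in\mathcal{K}'}\pi_\theta(\lyapD^{l_c})\le b/(1-\lambda)$. This is what lets me run the whole argument in a single $W$, for which I take $W=\lyapD^{l_c}$, noting $\|S\|_{\lyapD^{l_c}}\le\|S\|_{\lyapD}<\infty$ by (DRI2); the same steps work verbatim in $\lyapD$-norm using the first line of (DRI3).

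Set $g_k^\theta=\Pi_\theta^k S-\pi_\theta(S)$ and define $\poinu[s][\eta]=\sum_{k\ge 0}g_k^\theta$. A one-line telescoping check ($\Pi_\theta g_k^\theta=g_{k+1}^\theta$, using that kernels fix constants) gives $\poinu[s][\eta]-\Pi_\theta\poinu[s][\eta]=S(\cdot)-\pi_\theta(S)$, and $\|g_k^\theta\|_{\lyapD^{l_c}}\le\bar C\bar\gamma^k\|S\|_{\lyapD^{l_c}}$ makes the series converge geometrically, so that $\sup_{\theta\in\mathcal{K}'}\|\poinu[s][\eta]\|_{\lyapD^{l_c}}<\infty$; the companion bound for $\Pi_\theta\poinu[s][\eta]$ follows from the identity $\Pi_\theta\poinu[s][\eta]=\poinu[s][\eta]-S+\pi_\theta(S)$. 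This disposes of the first line of \Cref{hyp:A3}.

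For the regularity I first treat the stationary mean: applying $\pi_\theta$ to the Poisson identity for $\theta'$ and using $\pi_\theta\Pi_\theta=\pi_\theta$ yields the clean formula $\pi_\theta(S)-\pi_{\theta'}(S)=\pi_\theta\bigl((\Pi_\theta-\Pi_{\theta'})\poinu[s'][\eta']\bigr)$, which by (DRI3) and $\sup_\theta\pi_\theta(\lyapD^{l_c})<\infty$ is $\lesssim|\theta-\theta'|$. For $\poinu{}$ itself I write $\poinu[s][\eta]-\poinu[s'][\eta']=\sum_{k\ge0}(g_k^\theta-g_k^{\theta'})$ and bound each summand two ways. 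The decay bound is $\|g_k^\theta-g_k^{\theta'}\|_{\lyapD^{l_c}}\le 2\bar C\bar\gamma^k\|S\|_{\lyapD^{l_c}}$. The regularity bound comes from the recentred telescoping identity $\Pi_\theta^k S-\Pi_{\theta'}^k S=\sum_{j=0}^{k-1}\Pi_{\theta'}^j(\Pi_\theta-\Pi_{\theta'})g_{k-1-j}^\theta$, where the crucial point is that $\Pi_\theta-\Pi_{\theta'}$ annihilates constants, so it acts on the recentred, geometrically small $g_{k-1-j}^\theta$ rather than on $\Pi_\theta^{k-1-j}S$. Estimating the inner difference by (DRI3) and the outer power via $\|\Pi_{\theta'}^j h\|_{\lyapD^{l_c}}\le(\bar C\bar\gamma^j+\pi_{\theta'}(\lyapD^{l_c}))\|h\|_{\lyapD^{l_c}}$ leaves the summable middle factor $\bar\gamma^{k-1-j}|\theta-\theta'|$, giving $\|g_k^\theta-g_k^{\theta'}\|_{\lyapD^{l_c}}\lesssim|\theta-\theta'|$ uniformly in $k$. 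Taking the minimum of the two bounds and splitting the sum at $N\sim\log(1/|\theta-\theta'|)$ produces $\|\poinu[s][\eta]-\poinu[s'][\eta']\|_{\lyapD^{l_c}}\lesssim|\theta-\theta'|\log(1/|\theta-\theta'|)\lesssim|\theta-\theta'|^\alpha$ for any $\alpha\in(0,1)$, and the estimate for $\Pi_\theta\poinu[s][\eta]-\Pi_{\theta'}\poinu[s'][\eta']$ again follows from $\Pi_\theta\poinu[s][\eta]=\poinu[s][\eta]-S+\pi_\theta(S)$, completing \Cref{hyp:A3}. Finally, \Cref{hyp:reg_kernel} collects the existence, uniqueness and uniform ergodicity of $\pi_\theta$ from (DRI1) with the Lipschitz control of $\theta\mapsto\pi_\theta$ just obtained, so it holds as well.

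The step I expect to be the main obstacle is the uniform-in-$k$ regularity bound $\|\Pi_\theta^k S-\Pi_{\theta'}^k S\|_{\lyapD^{l_c}}\lesssim|\theta-\theta'|$: individual kernel powers are only controlled through the drift constant $\kappa\ge 1$ and are not contractions, so a naive telescoping loses a factor $\kappa^j$ that destroys summability. The fix is the recentring above, which replaces $\Pi_\theta^{k-1-j}S$ by $g_{k-1-j}^\theta$ and thereby injects the geometric factor $\bar\gamma^{k-1-j}$; this both neutralises the growth of $\Pi_{\theta'}^j$ (which is absorbed into the uniform bound after decomposing through $\pi_{\theta'}$) and renders the sum over $j$ finite. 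The two-norm formulation of (DRI3) is exactly what keeps this self-contained whether one runs the argument in $\lyapD$- or $\lyapD^{l_c}$-norm.
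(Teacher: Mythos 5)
Your proof is correct and is essentially the argument the paper relies on: the lemma is established by invoking \citep[Proposition 6.1]{andrieu2005stability}, whose proof is exactly your construction --- the geometric-series Poisson solution $\sum_{k\ge 0}(\Pi_\theta^k S-\pi_\theta(S))$, the recentred telescoping of $\Pi_\theta^k-\Pi_{\theta'}^k$ that lets (DRI3) act on the geometrically small centred terms, and the split of the series at $N\sim\log(1/|\theta-\theta'|)$ giving H\"older regularity of any order $\alpha<1$. The one adjustment I would make is to take $W=\lyapD$ rather than $\lyapD^{l_c}$ (as you note, the same argument runs verbatim in the $\lyapD$-norm via the first line of (DRI3)): \Cref{hyp:A3} as stated is satisfied either way, but the downstream condition \Cref{hyp:momentumW} requires moments of $W^{l_c}$, which the drift in (DRI1) controls only when $W=\lyapD$.
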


It is \citep[Proposition 6.1]{andrieu2005stability} adapted to our framework here $\Theta=\Rset^d\times ]0,\eta_0], \beta=1$. We removed the implications depending on the context of \citep{andrieu2005stability}.

Furthermore, if we control the learning steps $(\gamma_n)$, the bias paramter $(\eta_n)$ and momentums of the latent variables $(Z_n)$, the desired result can be established.
\begin{assumptionnonsup}
    \label{hyp:learning_step}
    The sequence $(\gamma_n)$ and $(\eta_n)$ are noninscreasing, positive and satisfy $\sum_{k=0}^{\infty} \gamma_k=\infty$, $\lim_{k\to \infty}\gamma_k=0$, $\limsup_{k\to \infty}|\gamma_k^{-1}-\gamma_{k+1}^{-1}|=0$ and 
    $$\sum_{k=1}^{\infty} \{\gamma_k^2 +\gamma_k|\eta_{k+1}-\eta_k|^\alpha  +\gamma_k^{1+\alpha} \}<\infty $$
    where $\alpha$ is defined in \Cref{hyp:A3}.
\end{assumptionnonsup}
\begin{assumptionnonsup}
    \label{hyp:momentumW}
    For any compact $\mathcal{K}\subset \Rset^d$, there exists a constant $C>0$ such that for any \(z \in \mcz\),
  \[
  \sup _{s \in \mathcal{K}}\sup_{k\geq 0^*} \mathbb{E}_{z, s}\left[W^{l_c}\left(Z_{k}\right) \mathds{1}_{\{\sigma(\mathcal{K}) \geq k\}}\right] \leq C W^{l_c}(z).
  \]
  $$\sigma(\mathcal{K})=\inf\{k\geq 0: s_k\notin \mathcal{K} \}\cup\{+\infty\}\eqsp, $$
  $W$ is defined in \Cref{hyp:A3}.
\end{assumptionnonsup}

\begin{theorem}
    \label{thm:simplification}
    Assume \Cref{hyp:DRI}, \Cref{hyp:learning_step} and \Cref{hyp:momentumW} then \Cref{hyp:noise_bias_control} is satisfied.
\end{theorem}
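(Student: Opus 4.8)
The plan is to verify \Cref{hyp:noise_bias_control} by the Poisson-equation (Métivier--Priouret) decomposition of the Markovian noise, in the spirit of \citep{andrieu2005stability}, with every estimate localized on the event $\{\sigma(\mathcal{K}) \geq k\}$ so that the uniform-in-$s$ controls are available. Indeed, \Cref{hyp:DRI} yields \Cref{hyp:A3} through \Cref{lemma:DRI_A3}, giving a Poisson solution $\hat g_{s,\eta} := \poinu[s][\eta]$ with $\sup_{\mathcal{K}'}(\|\hat g_{s,\eta}\|_W + \|\Kerpi[s][\eta]\hat g_{s,\eta}\|_W) < \infty$ and an $\alpha$-Hölder dependence on $(s,\eta)$, while \Cref{hyp:momentumW} controls $\mathbb{E}_{z,s}[W^{l_c}(Z_k)\mathds{1}_{\{\sigma(\mathcal{K}) \geq k\}}]$. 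Writing $\hat g_k := \poinu[s_k][\eta_k]$ and using the Poisson equation, the fluctuation reads $e_k = S(Z_k) - \check s_{\eta_k}(s_k) = \hat g_k(Z_k) - \Kerpi[s_k][\eta_k]\hat g_k(Z_k)$, and since $Z_{k+1}\sim \Kerpi[s_k][\eta_k](Z_k,\cdot)$ the last term equals $\mathbb{E}[\hat g_k(Z_{k+1})\mid \mathcal{F}_k]$ for the natural filtration $(\mathcal{F}_k)$.

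First I would add and subtract $\hat g_k(Z_{k+1})$ and resum by parts to obtain
\[
\sum_{k\le n}\gamma_k e_k = M_n + \sum_{k\le n}\gamma_k\big(\hat g_k(Z_k) - \hat g_{k-1}(Z_k)\big) + \sum_{k\le n}(\gamma_k-\gamma_{k-1})\,\hat g_{k-1}(Z_k) + R_n,
\]
where $M_n = \sum_{k\le n}\gamma_k\big(\hat g_k(Z_{k+1}) - \Kerpi[s_k][\eta_k]\hat g_k(Z_k)\big)$ is an $(\mathcal{F}_k)$-martingale and $R_n$ collects the two Abel boundary terms. This exhibits the splitting into a \emph{noise} (the martingale $M_n$) and a \emph{bias} (the three remaining, non-martingale, contributions) that \Cref{hyp:noise_bias_control} encodes.

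The control of each piece then matches one assumption. For $M_n$, the conditional variance of each increment is at most $\|\hat g_k\|_W^2\,\Kerpi[s_k][\eta_k]W^2(Z_k)$; on $\{\sigma(\mathcal{K})\geq k\}$ the prefactor is uniformly bounded, and since $l_c\geq 2$ gives $W^2\le W^{l_c}$, \Cref{hyp:momentumW} bounds the stopped predictable quadratic variation by $CW^{l_c}(z)\sum_k\gamma_k^2 < \infty$, so the stopped martingale converges in $L^2$ and almost surely. For the regularity term, the $\alpha$-Hölder bound of \Cref{hyp:A3} gives $|\hat g_k(Z_k) - \hat g_{k-1}(Z_k)| \le C\,|(s_k,\eta_k)-(s_{k-1},\eta_{k-1})|^\alpha W(Z_k)$; bounding $|s_k - s_{k-1}|$ by $O(\gamma_k W(Z_k))$ and separating the $\eta$-part yields terms of order $\gamma_k^{1+\alpha}W^{1+\alpha}(Z_k)$ and $\gamma_k|\eta_k-\eta_{k-1}|^\alpha W(Z_k)$, and since $1+\alpha\le 2\le l_c$, taking expectations with \Cref{hyp:momentumW} and summing is finite precisely because $\sum_k(\gamma_k^{1+\alpha}+\gamma_k|\eta_{k+1}-\eta_k|^\alpha)<\infty$. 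For the step-increment sum and the boundary $R_n$, I use the $\|\hat g_k\|_W$ bound together with $|\gamma_k-\gamma_{k-1}| = \gamma_{k-1}\gamma_k|\gamma_k^{-1}-\gamma_{k-1}^{-1}|$, $\limsup_k|\gamma_{k+1}^{-1}-\gamma_k^{-1}|=0$, $\sum_k\gamma_k^2<\infty$ and $\gamma_n\to 0$, which makes them convergent. Collecting these gives the quantitative bounds demanded by \Cref{hyp:noise_bias_control}.

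The hard part will be the bookkeeping of the localization rather than any single estimate: all the uniform bounds of \Cref{hyp:DRI}, \Cref{hyp:A3} and \Cref{hyp:momentumW} hold only while $s_k$ remains in $\mathcal{K}$, so every display above must in fact be read on the stopped process $(Z_{k\wedge\sigma(\mathcal{K})})$, and one must check that stopping preserves both the martingale property of $M_n$ and the telescoping that produced $R_n$ (with some care on the index mismatch between $\mathds{1}_{\{\sigma(\mathcal{K})\geq k\}}$ and the moment control stated for $Z_k$). The secondary subtlety is that the Poisson solution itself changes with $k$; it is the joint $(s,\eta)$-regularity of \Cref{hyp:A3} that closes this, and one must feed the combined increment $|(s_k,\eta_k)-(s_{k-1},\eta_{k-1})|^\alpha$ into the Hölder bound and only afterwards split it into the $\gamma_k^\alpha$ and $|\eta_k-\eta_{k-1}|^\alpha$ pieces that \Cref{hyp:learning_step} is tailored to absorb.
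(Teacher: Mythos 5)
Your proposal follows essentially the same route as the paper: the Poisson-equation decomposition \`a la Andrieu et al.\ into a martingale part, an $\alpha$-H\"older regularity term in $(s_k,\eta_k)$, a step-difference term and Abel boundary terms, each localized on $\{\sigma(\mathcal{K})\geq k\}$ and controlled respectively by Doob's convergence theorem with the moment bound of \Cref{hyp:momentumW}, the H\"older estimates of \Cref{hyp:A3} combined with $|s_k-s_{k-1}|=O(\gamma_k W(Z_k))$, and the summability conditions of \Cref{hyp:learning_step}. The only cosmetic differences are that the paper's terms $T_n^{(2)}$ and $T_n^{(3)}$ carry the one-step-smoothed solution $\Pi_{s}^{\eta}\nu_{s}^{\eta}$ rather than $\nu_{s}^{\eta}$ itself (both being covered by \Cref{hyp:A3}), and that it makes the exit term $T_n^{(5)}$, which vanishes on $\tilde{\Lambda}_\mathcal{K}$, explicit---exactly the localization bookkeeping your last paragraph anticipates.
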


The proof is postponed to \Cref{proof:simplification}.

The assumption \Cref{hyp:momentumW} is essential and cannot be recovery directly from \Cref{hyp:DRI}, if we want this hypothesis, 
we have to change a little
the recursion (2) to ensure that $(s_n)$ stays in a compact by design and that $|s_{n+1}-s_n|$ is controlled by a non increasing sequence $(\epsilon_n)$ \citep[p.9]{andrieu2005stability}.
We did not choose this framework because it gives rise to technicalities that are rarely implemented in practice. Even if it offers a guarantee of convergence, there are some drawbacks regarding convergence speed.

\newpage
\section{PROOFS}
\subsection{Asymptotic Convergence}
In this section, we will prove \Cref{thm:asymptotic} in \Cref{section:asymptotic} and \Cref{thm:simplification} in \Cref{section:controling_the_fluctuations}.

\subsubsection{Auxiliary Lemma}
Before to prove the main theorem, we introduce another technical Lemma similar to Lemma 1,
\begin{lemma}
  Assume \Cref{assumption:exponential_family}-\labelcref{hyp:regularity_loss}, then for any $s\in \Rset^d$,
  \begin{equation}
    \label{eq:eq_V_h_mm}
    \nabla V(s)=-A(s)h(s)\quad \text{and} \quad |\nabla V(s)|^2/\lambda_M\leq \left\langle\nabla V(s)|h(s)\right\rangle \leq |\nabla V(s)|^2/\lambda_m \eqsp .
  \end{equation}
\end{lemma}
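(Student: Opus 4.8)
The plan is to establish the gradient identity $\nabla V(s)=-A(s)h(s)$ first, and then to read off the two-sided inequality purely from the spectral properties of $A(s)$. For the identity I would lean on the exponential-family structure of \Cref{assumption:exponential_family}: writing the complete-data log-density in the canonical form $\langle S(z),\phi(\theta)\rangle-\psi(\theta)$ and letting $\bar\theta(s)$ denote the associated M-step map (the maximiser of $\theta\mapsto\langle s,\phi(\theta)\rangle-\psi(\theta)$), the first-order optimality condition reads $\nabla\psi(\bar\theta(s))=D\phi(\bar\theta(s))^{\top}s$. Fisher's identity writes the gradient of the observed log-likelihood as $\nabla_\theta\ell(\theta)=D\phi(\theta)^{\top}\bar s(\theta)-\nabla\psi(\theta)$, where $\bar s(\theta)$ is the posterior mean of the sufficient statistic; evaluating at $\theta=\bar\theta(s)$ and inserting the optimality condition collapses this to $\nabla_\theta\ell(\bar\theta(s))=D\phi(\bar\theta(s))^{\top}h(s)$, because $h(s)=\bar s(\bar\theta(s))-s$ by definition.

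Differentiating $V=-\ell\circ\bar\theta$ by the chain rule then gives $\nabla V(s)=-A(s)h(s)$ with $A(s)=\bigl(D\phi(\bar\theta(s))\,D\bar\theta(s)\bigr)^{\top}$, where the Jacobian $D\bar\theta(s)$ is recovered by differentiating the optimality condition through the implicit function theorem; in the natural parametrisation this identifies $A(s)$ with the symmetric matrix $[\mathrm{Hess}\,\psi(\bar\theta(s))]^{-1}$. This is the step I expect to be the main obstacle: it needs the differentiability and regularity of the M-step map $\bar\theta$ (invertibility of $\mathrm{Hess}\,\psi$, hence applicability of the implicit function theorem) together with the legitimacy of differentiating under the integral sign in Fisher's identity — precisely the regularity that \labelcref{hyp:regularity_loss} is designed to supply, and the analogue of what the companion lemma already provides.

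With the identity in hand, the rest is elementary linear algebra. The regularity assumption \labelcref{hyp:regularity_loss} ensures $A(s)$ is symmetric with $-A(s)$ positive definite and spectrum contained in $[\lambda_m,\lambda_M]$ (this is how $\lambda_m,\lambda_M$ enter). Substituting $\nabla V(s)=-A(s)h(s)$ yields
\[
\langle \nabla V(s)\,|\,h(s)\rangle = h(s)^{\top}\bigl(-A(s)\bigr)h(s), \qquad |\nabla V(s)|^{2} = h(s)^{\top}\bigl(-A(s)\bigr)^{2}h(s).
\]
Diagonalising $-A(s)=\sum_i \mu_i e_i e_i^{\top}$ with $\mu_i\in[\lambda_m,\lambda_M]$ and decomposing $h(s)=\sum_i c_i e_i$, the two quantities become the weighted sums $\sum_i \mu_i c_i^{2}$ and $\sum_i \mu_i^{2} c_i^{2}$. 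The termwise bounds $\mu_i^{2}/\lambda_M\le\mu_i\le\mu_i^{2}/\lambda_m$, valid exactly because $\lambda_m\le\mu_i\le\lambda_M$, then sandwich $\langle\nabla V(s)\,|\,h(s)\rangle$ between $|\nabla V(s)|^{2}/\lambda_M$ and $|\nabla V(s)|^{2}/\lambda_m$, which is the claim. No regularity beyond the spectral bound already invoked is required here.
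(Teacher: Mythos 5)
Your proposal is correct and follows essentially the same route as the paper: the paper obtains the identity $\nabla V(s)=-A(s)h(s)$ by citing Lemma 2 of Delyon et al.\ (1999) — whose proof is precisely the Fisher-identity/M-step-optimality/chain-rule computation you reproduce — and then derives the two-sided bound from the spectral localisation of $A(s)$ exactly as you do, merely writing $h(s)=-A^{-1}(s)\nabla V(s)$ and bounding the quadratic form in $\nabla V(s)$ rather than diagonalising $-A(s)$ against $h(s)$. The one point to tidy up is a sign: your own identification $A(s)=[\mathrm{Hess}\,\psi(\bar\theta(s))]^{-1}\succ 0$ is incompatible with your subsequent assertion that $-A(s)$ is positive definite with spectrum in $[\lambda_m,\lambda_M]$, though this tension is inherited from the paper itself, whose displayed inequality is used elsewhere in the form $\langle\nabla V(s)\,|\,h(s)\rangle\le-|\nabla V(s)|^2/\lambda_M$ and so evidently carries a misplaced minus sign in the lemma statement.
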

\begin{proof}
In the proof of \citep[Lemma 2]{delyon1999convergence},
the authors derive that for any $s\in \Rset^d$, $\nabla V(s)=-A(s)h(s)$.
 By \Cref{hyp:regularity_loss}, $A(s)$ is invertible and $A^{-1}(s)$ has repectively for minimal and maximal eigen values $1/\lambda_M$ and $1/\lambda_m$, thus, writing for any $s\in\Rset^d$, $h(s)=-A^{-1}(s)\nabla V(s) $ yields the inequality \eqref{eq:eq_V_h_mm}.
\end{proof}

\newpage
\subsubsection{Proof of \Cref{thm:asymptotic}}\label{section:proof_biaspropagation}
Except for some inequalities and constants, the proof is nearly the same as in \citep{tadic2017asymptotic}.
 The notations of \citep{tadic2017asymptotic} and ours are not the same: the function $V$ and the sequences $(s_n)$,$(\gamma_i),(\beta_n)$ in this paper are replaced by the function $f$ and the sequences $(\theta_n)$, $(\alpha_i),(\eta_n)$ in their paper.
  In the following, we detail all the proof modifications in \citep{tadic2017asymptotic} to get the desired theorem.
  We expose the sketch of proof related to each part of the proof before to details the changes.

  \paragraph{Proof of \ref{thm:asymp:item1}}
First remark that $\mss=\{h(s)=0: s\in \Rset^d \}$ by 

  In \citep[p.15]{tadic2017asymptotic}, the idea is to consider a perturbated flow equations related to the velocity field $-\nabla f$ with the perturbation size $\beta>0$, for any $t\geq 0$,
  \begin{equation}
    \label{eq:perturbated}
    \frac{\dd \theta(t) }{\dd t} \in F_\beta(\theta(t))^{\text{tadic}}, \quad F_\beta(\theta(t))^{\text{tadic}}=\{-\nabla \theta(s(t))+v, v\in \Rset^{d_\theta}: |v|\leq \beta\} \eqsp .
  \end{equation} 
  Then, by \citep[Proposition 4.1, Theorem 5.7]{benaim2006dynamics}, on the event $\Lambda_Q$, all limit points of $(\theta_n)$ are in a set $R_{Q,2\beta}$ called "recurrent set" which is related to \eqref{eq:perturbated}[$\beta$]: there is a link between the asymptotics of the discretized and the continuous flow.
  By \citep[Theorem 3.1]{benaim2012perturbations}, working with \eqref{eq:perturbated} instead of the discretization, there exists $\psi_Q$ : $[0, \infty) \rightarrow$ $[0, \infty)$ 
  (depending only on $f(\cdot)$ ) such that $\lim_{t \rightarrow 0} \psi_Q(t)=\psi_Q(0)=0$ and $ R_{Q,\beta} \subset \{x\in \Rset^d : d(x,R_{Q,0})\leq \psi_Q(\beta/2) \} $ for any $\beta\geq 0$.
  Finally, by \citep[Proposition 4]{hurley1995chain}, we can show that $ R_{Q,0}=\{ x\in\Rset^d: \nabla f(x)=0\} $ provided that $f$ is $p$-continuously differentiable with $p>d$ by using Sard's theorem.

  To adapt the reasoning, we take $h$ instead of $-\nabla f$ such that we consider $F_\beta(s(t))=\{h(s)+v, v\in \Rset^d: |v|\leq \gamma\}$ for any $t,\beta \geq0$.
  \citep[Theorem 3.1]{benaim2012perturbations} and \citep[Proposition 4.1, Theorem 5.7]{benaim2006dynamics} remain usable as long as $h$ is continuous, which is given by \Cref{hyp:regularity_loss}.
\citep[Proposition 4]{hurley1995chain} can't be applied directly. This Proposition is for gradient flow, i.e.
$\dd x/\dd t=-\nabla f(x) $ with $f$ continuously differentiable.
 However, the proof of \citep[Proposition 4]{hurley1995chain} can still be done by replacing $-\nabla f$ by $h$ and $f$ by $V$.
 Indeed, for any $s\in\Rset^d$ and $K\subset \Rset^d$ a compact such that $\phi([0,1],s)\subset K$,
 $$V(\phi(0,s))-V(\phi(1,s))=-\int_0^1 \frac{\dd V(\phi(t,s))}{\dd t} \dd t=-\int_0^1 \underbrace{\left\langle\nabla V(\phi(t,s))|h(\phi(t,s))\right\rangle}_{=F(\phi(t,s))} \dd t$$
 $$\geq \frac{1}{\lambda_M}\int_0^1  |\nabla V(\phi(t,s))|^2 \dd t \eqsp , $$
 where we use \eqref{eq:eq_V_h_mm}. This shows that the following flow $\dd x/\dd t=h(x) $ decreases an energy $V$.
  The others arguments in the proof of \citep[Proposition 4]{hurley1995chain} remains the same as long as we remark that the regular points of $V$ are dense by Sard's theorem and \Cref{hyp:regularity_loss}, and by \eqref{eq:eq_V_h_mm} we have
  \begin{equation}
    \mss=\{s\in\Rset^d:\nabla V(s)=0 \}=\{s\in\Rset^d:\nabla h(s)=0 \}\eqsp .
  \end{equation}
  The proof of \ref{thm:asymp:item1} is complete.
  
\paragraph*{Proof of \ref{thm:asymp:item2}-\ref{thm:asymp:item3}}
We work on the event $\Lambda_Q$, i.e., the sequence $(s_n)$ is in the compact $Q$ for $n$ large enough. 

In \citep[Proposition 8.2]{tadic2017asymptotic}, the authors show that for any $s\in \Rset^d$, the distances $d(s,\mss),d(f(s),\mss)$ can be bounded from above by a term of the form $a_1|\nabla f(s)|^{a_2}$ with $a_1,a_2>0$ using geometric and regularity arguments. 
The remaining work is to bound $|\nabla f(s)|$, which is performed in \citep[Proposition 8.3]{tadic2017asymptotic} using previous results \citep[Lemma 8.1-3]{tadic2017asymptotic} all deriving from a taylor expansion 
of $f(s_{a(n,t)})-f(s_n)$ in $s_n$ given in \citep[equation (32) p.17]{tadic2017asymptotic}.
 The idea is to measure the influence of the bias $\beta$ and the noise $e_n$ compared to the influence of the velocity field $\nabla f$: if $\nabla f$ has more influence than the other terms, $f$ will decrease, and its gradient as well till the point when $\nabla f$ has less influence than the noise and the bias.
 They show that the norm of the gradient at the asymptotic is ruled only by the bias since the accumulated noise vanishes by \Cref{hyp:noise_bias_control}.

 In order to adapt the proof, we should only change the Taylor expansion \citep[equation (32) p.17]{tadic2017asymptotic} by replacing $-\nabla f$ by $h$ and then propagating the changes in the analysis related to the norm of the gradient in \citep[Lemma 8.1-3]{tadic2017asymptotic} by using \eqref{eq:eq_V_h_mm}.


 By doing a Taylor expansion, for any $n\geq 0$, $t\in (0,\infty)$,
$$ V(s_{a(n,t)})-V(s_n)=  \sum_{i=n}^{a(n,t)-1} \gamma_i \left\langle\nabla V(s_n)| h(s_n)\right\rangle-\left\langle\nabla V(s_n)|\sum_{i=n}^{a(n,t)-1} \gamma_i \xi_i\right\rangle+|\phi_n(t)|$$
where $\phi_n$ is defined in \citep[p.17]{tadic2017asymptotic}. Then, by using \eqref{eq:eq_V_h_mm},
$$ V(s_{a(n,t)})-V(s_n)\leq - |\nabla V(s_n)|\left(\frac{1}{\lambda_M}|\nabla V(s_n)|\sum_{i=n}^{a(n,t)-1} \gamma_i-|\sum_{i=n}^{a(n,t)-1} \gamma_i \xi_i|\right)+|\phi_n(t)|$$

We denote by $\phi=\limsup_{n\to \infty } |\nabla V(s_n)|$ and $C_{1,Q}=\sup_{s\in Q}|\nabla V(s)| $. The second change are in equations (38),(39) in \citep[Lemma 8.1, p.18]{tadic2017asymptotic} which are replaced for any $t\in (0,\infty)$ by 
$$\limsup_{n\to \infty} \max_{n\leq k<a(n,t)} | V(s_k)-V(s_n)|\leq C_{1,Q} t(\phi+\beta)\max\left(\frac{1}{\lambda_m},1\right)$$
$$\limsup_{n\to \infty} |\phi_n(t)|\leq C_{1,Q} t^2(\phi+\beta)^2\max\left(\frac{1}{\lambda_m},1\right)^2$$
since the last equation in \citep[p.18]{tadic2017asymptotic} has to be replaced by, 
$$|s_k-s_n|\leq \sum_{i=n}^{k-1}\gamma_i |h(s_i)|+|\sum_{i=n}^{k_1}\gamma_i \xi_i|\leq  t(\phi+\epsilon)\max\left(\frac{1}{\lambda_m},1\right)+\max_{n\leq i <a(n,t)} |\sum_{i=n}^{k_1}\gamma_i \xi_i|$$ 
where we used again \eqref{eq:eq_V_h_mm} and the fact that $|\nabla V(s_n)|\leq \phi $ for $n$ large enough as in \citep{tadic2017asymptotic}.

 From these changes, the constants used in the proof are modified.
The constant $\gamma$ is replaced p.20 by $\gamma= 2\lambda_M(\epsilon +\beta) $ and $C_{2,Q}=4\lambda_M M_Q $.
The \citep[Lemma 8.3]{tadic2017asymptotic} holds on $(\Lambda_Q\setminus N_0)\cap (\phi>2\beta \lambda_M )$.
It impacts the final constant, we have $K_Q= \lambda_M \max\left(2,\tilde{C}_Q,C_{2,Q}\right)$ instead of $\max\left(2,\tilde{C}_Q,C_{2,Q}\right)$, where $M_Q$ is a constant related to the Yomdin theorem \citep[Proposition 8.1, 8.2]{tadic2017asymptotic} and $\tilde{C}_Q = \sup_Q |\nabla V(s)|$.

\newpage
\subsubsection{Proof of \Cref{thm:simplification}}
\label{proof:simplification}

We follow the same scheme of proof of \citep[Proposition 5.2]{andrieu2005stability}.
We will intensely use the conditions \Cref{hyp:A3} given by \Cref{lemma:DRI_A3} and \Cref{hyp:DRI}.
$\mathbb{E}_{s, z}$ denotes the conditionnal expectation given $s_0=s$ and $Z_0=z$.
Let $\mathcal{K}\subset \Rset^d$ be a compact set and set the event $\tilde{\Lambda}_\mathcal{K}=\cap_{k=0}^\infty (s_k\in \mathcal{K})$.
Using the existence of solution to the Poisson equation,
for any $k\geq 1$, we have,
$$e_{k-1}=S(z_k)-\check{s}_{\eta_k}(s_{k-1})=(I-\Kerpi[s_{k-1}][\eta_k])\nu_{s_{k-1}}^{\eta_k}(z_k)\eqsp ,$$
and we denote by,
$$T_n=\sum_{k=1}^n \gamma_k e_{k-1} \mathds{1}_{\sigma(\mathcal{K},\epsilon)\geq k}\eqsp . $$
We want to show that $T_n$ converges almost surely on the event $\tilde{\Lambda}_\mathcal{K}$. Using for all $k\geq 0$:
$$\mathds{1}_{\sigma(\mathcal{K})\geq k}=\mathds{1}_{\sigma(\mathcal{K}) \geq k+1}+\mathds{1}_{\sigma(\mathcal{K}) = k}\eqsp, $$
me may write $T_n= \sum_{i=1}^5 T_n^{(i)}$ where,
\[
    \begin{aligned}
    \text{(A.1)}&\ee T_n^{(1)}=\sum_{k=1}^n \gamma_k\left(\poinu[s_{k-1}][\eta_k]\left(Z_k\right)-\Kerpi[s_{k-1}][\eta_k]\poinu[s_{k-1}][\eta_k]\left(Z_{k-1}\right)\right) \mathds{1}_{\{\sigma(\mathcal{K})\geq k\}}\eqsp, \\
    \text{(A.2)}& \ee T_n^{(2)}=\sum_{k=1}^{n-1} \gamma_{k+1}\left(\Kerpi[s_{k}][\eta_{k+1}] \poinu[s_{k}][\eta_{k+1}]\left(Z_k\right)-\Kerpi[s_{k-1}][\eta_k]\poinu[s_{k-1}][\eta_k]\left(Z_k\right)\right) \mathds{1}_{\{\sigma(\mathcal{K})  \geq k+1\}}\eqsp, \\
    \text{(A.3)}& \ee T_n^{(3)}=\sum_{k=1}^{n-1}\left(\gamma_{k+1}-\gamma_k\right) \Kerpi[s_{k-1}][\eta_k]\poinu[s_{k-1}][\eta_k]\left(Z_k\right) \mathds{1}_{\{\sigma(\mathcal{K}) \geq k+1\}} \eqsp ,\\
    \text{(A.4)}& \ee T_n^{(4)}=\gamma_1 \Kerpi[s_{0}][\eta_1] \poinu[s_{0}][\eta_1]\left(Z_0\right) \mathds{1}_{\{\sigma(\mathcal{K}) \geq 1\}}-\gamma_n \Kerpi[s_{n-1}][\eta_n]\poinu[s_{n-1}][\eta_n]\left(Z_n\right) \mathds{1}_{\{\sigma(\mathcal{K}) \geq n\}}\eqsp, \\
    \text{(A.5)}& \ee T_n^{(5)}=-\sum_{k=1}^{n-1} \gamma_k \Kerpi[s_{k-1}][\eta_k]\poinu[s_{k-1}][\eta_k]\left(Z_k\right) \mathds{1}_{\{\sigma(\mathcal{K})=k\}}
\end{aligned}
\]
We show now the convergence a.e of \(T_{n}^{(i)}, i=1, \ldots, 5\) on the event $\tilde{\Lambda}_\mathcal{K}$. First remark that $T_n^{(5)}=0$ on $\tilde{\Lambda}_\mathcal{K}$.
 In the sequel \(C\) denotes a constant which depends 
 only upon the compact set \(\mathcal{K}\) through the quantities defined 
 in the assumptions and whose value may change upon each appearance.
Denoting by 
  \[
      D( \boldsymbol{\gamma}, \mathcal{K}, z)=\sup _{s \in \mathcal{K}}  \sup _{k \geq 1} \mathbb{E}_{z, s}\left[W^{l_c}\left(Z_{k}\right) \mathds{1}_{\{\sigma(\mathcal{K}) \geq k\}}\right]\eqsp ,
      \]
      we have for any $s\in \Rset^d$ and $z\in \mcz$
  \[
    \begin{aligned}
        \text{(A.6)}\quad
    &  \mathbb{E}_{s, z}\left[\sum_{k=1}^\infty \gamma_k^2\left|\poinu[s_{k-1}][\eta_k]\left(Z_k\right)-\Kerpi[s_{k-1}][\eta_k]\poinu[s_{k-1}][\eta_k]\left(Z_{k-1}\right)\right|^2 \mathds{1}_{\{\sigma(\mathcal{K})\geq k\}}\right] \\
    &\leq C\left(\sum_{k=0}^{\infty} \gamma_k^{2}\right)^{l_c / 2} D( \boldsymbol{\gamma}, \mathcal{K}, z), \\
    \text{(A.7)}\quad
    & \mathbb{E}_{s, z}\left[\sum_{k=1}^{\infty} \gamma_{k+1}\left|\Kerpi[s_{k}][\eta_{k+1}] \poinu[s_{k}][\eta_{k+1}]\left(Z_k\right)-\Kerpi[s_{k-1}][\eta_k]\poinu[s_{k-1}][\eta_k]\left(Z_k\right)\right| \mathds{1}_{\{\sigma(\mathcal{K})  \geq k+1\}} \right] \leq  \\
    &C\left[\left(\sum_{k=1}^{\infty} |\gamma_k|^{1+\alpha}\right)^{\frac{l_c}{1+\alpha}}+\left(\sum_{k=1}^{\infty} \gamma_k |\eta_{k+1}-\eta_k|^{\alpha}\right)^{l_c} \right] D( \boldsymbol{\gamma}, \mathcal{K}, z)^{1+\alpha}, \\
    \text{(A.8)}\quad
    &  \mathbb{E}_{s, z}\left[\sum_{k=1}^{\infty}\left|\gamma_{k+1}-\gamma_k\right| |\Kerpi[s_{k-1}][\eta_k]\poinu[s_{k-1}][\eta_k]\left(Z_k\right) |\mathds{1}_{\{\sigma(\mathcal{K}) \geq k+1\}}\right] \\
    &\leq C \left(\sum_{k=1}^{\infty}|\gamma_{k}-\gamma_{k+1}|\right)^{l_c} D( \boldsymbol{\gamma}, \mathcal{K}, z) \\
    \text{(A.9)}\quad
    & 
     \mathbb{E}_{s, z}\left[ \gamma_n^{l_c} |\Kerpi[s_{n-1}][\eta_n]\poinu[s_{n-1}][\eta_n]\left(Z_n\right)|^{l_c} \mathds{1}_{\{\sigma(\mathcal{K}) \geq n\}}\right] \leq C \gamma_k^{l_c} D( \boldsymbol{\gamma}, \mathcal{K}, z) .
    \end{aligned}
    \]

We will show these inequalities at the end of the proof.
For any $k\geq 1$,
\[
\begin{aligned}
& \mathbb{E}_{z, s}\left[\left(\poinu[s_k][\eta_{k+1}]\left(Z_{k+1}\right)-\Kerpi[s_k][\eta_{k+1}] \poinu[s_k][\eta_{k+1}]\left(Z_{k}\right)\right) \mathds{1}_{\{\sigma(\mathcal{K}) \geq k+1\}} \mid \mathcal{F}_{k}\right]= \\
&\left(\Kerpi[s_k][\eta_{k+1}] \poinu[s_k][\eta_{k+1}]\left(Z_{k}\right)-\Kerpi[s_k][\eta_{k+1}] \poinu[s_k][\eta_{k+1}]\left(Z_{k}\right)\right) \mathds{1}_{\{\sigma(\mathcal{K}) \geq(k+1)\}}=0
\end{aligned}
\]
\(T_{n}^{(1)}\) is a \(\left(\mathbb{R}^{d}\right.\)-valued) martingale. Applying Doobs theorem with (A.6), we have the convergence of $(T_n^{(1)})$ almost surely on $\tilde{\Lambda}_\mathcal{K}$.
Since \(T_{n}^{(5)} \mathds{1}_{\{\sigma(\mathcal{K}) \geq n\}}=0\), we have

\[
T_{n} \mathds{1}_{\{\sigma(\mathcal{K}) \geq n\}}=\sum_{i=1}^{4} T_{n}^{(i)} \mathds{1}_{\{\sigma(\mathcal{K}) \geq n\}} .
\]
Using the inequalities (A.7)-(A.9) with \Cref{hyp:gamma_n} gives the convergence of $T_{n}^{(i)}$ for $i=2,3,4$ on the event $\tilde{\Lambda}_\mathcal{K}$. On the event 
$\{\sup_n |s_n|<\infty \} $, we can work in $\tilde{\Lambda}_\mathcal{K}$ for some compact set $\mathcal{K}\subset \Rset^d$, then $\sum_{i}\gamma_{i+1} e_i$ converges almost surely, which yields the assumption \Cref{hyp:noise_bias_control} is given.

\paragraph{Proof of (A.6)}
    
Under \Cref{hyp:A3}, we have for any $k\geq 1$,
$$ 
\left|\poinu[s_{k-1}][\eta_k]\left(Z_k\right)-\Kerpi[s_{k-1}][\eta_k]\poinu[s_{k-1}][\eta_k]\left(Z_{k-1}\right) \right| \mathds{1}_{\{\sigma(\mathcal{K}) \geq k+1\}} 
\leq
W(Z_k) \mathds{1}_{\{\sigma(\mathcal{K}) \geq k+1\}}+W(Z_{k-1}) \mathds{1}_{\{\sigma(\mathcal{K}) \geq k\}}
$$
and then,
\begin{align}
\mathbb{E}_{s, z}\left[\sum_{k=1}^\infty \gamma_k^2\left|\poinu[s_{k-1}][\eta_k]\left(Z_k\right)-\Kerpi[s_{k-1}][\eta_k]\poinu[s_{k-1}][\eta_k]\left(Z_{k-1}\right)\right|^2 \mathds{1}_{\{\sigma(\mathcal{K})\geq k\}}\right]
&\leq 
\mathbb{E}_{s, z}\left[\sum_{k=1}^\infty 4 \gamma_k^2 W(Z_{k})^2  \mathds{1}_{\{\sigma(\mathcal{K})\geq k\}}\right] 
\\
&\leq C(\sum_{k=1}^\infty  \gamma_k^2)^{\frac{l_c}{2}} D( \boldsymbol{\gamma}, \mathcal{K}, z)\eqsp ,
\end{align}
where we used that $\{\gamma_i\}$ is decreasing and a Minkowski inequality.
 
\paragraph{Proof of (A.7)}
 Under \Cref{hyp:A3}, we have
For any $k\geq 1$,
$$
\begin{aligned}
\gamma_{k+1}\left| \Kerpi[s_k][\eta_{k+1}] \poinu[s_k][\eta_{k+1}]\left(Z_{k}\right)-\Kerpi[s_{k-1}][\eta_{k}] \poinu[s_{k-1}][\eta_{k}]\left(Z_{k}\right)\right | \mathds{1}_{\{\sigma(\mathcal{K}) \geq k+1\}}  
\leq C \gamma_{k+1} W\left(Z_{k}\right)\left|(s_{k},\eta_{k+1})-(s_{k-1},\eta_{k})\right|^{\alpha} \mathds{1}_{\{\sigma(\mathcal{K}) \geq k+1\}}
\end{aligned}
$$
then, using the bound on sufficient statistics by \Cref{hyp:A3} and that $s_k$ is in a compact,  for any $k\geq 1$,
$$
\begin{aligned}
    |s_{k}-s_{k-1}| \mathds{1}_{\{\sigma(\mathcal{K})  \geq k+1\}}&\leq \gamma_k\left|S(Z_{k+1})-s_k \right| \mathds{1}_{\{\sigma(\mathcal{K})  \geq k+1\}}\\
    &\leq \gamma_k (W(Z_k)+C)\mathds{1}_{\{\sigma(\mathcal{K}) \geq k+1\}}\eqsp ,
\end{aligned}
$$
Thus, we have,
$$
\begin{aligned}
    & \sum_{k=1}^{\infty} \gamma_{k+1}\left|\Kerpi[s_{k}][\eta_{k+1}] \poinu[s_{k}][\eta_{k+1}]\left(Z_k\right)-\Kerpi[s_{k-1}][\eta_k]\poinu[s_{k-1}][\eta_k]\left(Z_k\right)\right| \mathds{1}_{\{\sigma(\mathcal{K})  \geq k+1\}}    \\
    & \leq C\sum_{k=0}^{\infty} \gamma_{k+1} \left(|\eta_{k+1}-\eta_k|^{\alpha}+\gamma_k^{\alpha} (W(Z_k)+C)^{\alpha}\right) W\left(Z_{k}\right) \mathds{1}_{\{\sigma(\mathcal{K}) \geq k+1\}} \\
    &\leq C \sum_{k=0}^{\infty} \gamma_{k+1} |\eta_{k+1}-\eta_k|^{\alpha}W\left(Z_{k}\right)+\gamma_k^{1+\alpha}(W(Z_k)+ W(Z_k)^{\alpha+1})  \mathds{1}_{\{\sigma(\mathcal{K}) \geq k+1\}}\eqsp ,
\end{aligned}
$$
in the last inequality, we used that $\{\beta_k \}$ is bounded and that $(\gamma_k)$ is decreasing. We conclude using the Minkowski's inequality.

Remark that if $\beta>0$, we have $\epsilon_k=|s_{k}-s_{k-1}|=\Theta(\gamma_k)$ which makes it impossible to apply the proof of \citep{delyon1999convergence} to show that we can have $(s_n)$ in a compact by design. Indeed, it is needed to have $\sum_n (\frac{\gamma_n}{\epsilon_n})^{l_c}<\infty$, but here we have $\epsilon_n\sim \gamma_n$ because of the bias.

\paragraph{Proof of (A.8)}
Under \Cref{hyp:A3},
$$
\begin{aligned}
\sum_{k=1}^{n-1}|\gamma_{k+1}-\gamma_{k}| \left|\Kerpi[s_{k-1}][\eta_{k}] \poinu[s_{k-1}][\eta_{k}]\left(Z_{k}\right)\right| \mathds{1}_{\{\sigma(\mathcal{K}) \geq k+1\}} 
\leq C \sum_{k=1}^{\infty}\left|\gamma_{k}-\gamma_{k+1}\right| W\left(Z_{k}\right) \mathds{1}_{\{\sigma(\mathcal{K}) \geq k+1\}},
\end{aligned}
$$

and the proof follows from Minkowski's inequality.

\paragraph{Proof of (A.9)}
Under \Cref{hyp:A3},
$$
\mathbb{E}_{s, z}\left[\left|\Kerpi[s_{n-1}][\eta_{n-1}] \poinu[s_{n-1}][\eta_{n-1}]\left(Z_{n}\right) \mathds{1}_{\{\sigma(\mathcal{K}) \geq n\}}\right|^{l_c} \right]\leq C  \mathbb{E}_{s, z}\left[W^{l_c}\left(Z_{n}\right) \mathds{1}_{\{\sigma(\mathcal{K}) \geq n\}} \right] 
$$

\newpage
\subsection{Non-Asymptotic Convergence}\label{section:proof_nonasymptotic}
In this section, we will prove \Cref{thm:nonasymptotic} in \Cref{section:Non_asymptotic}.

\subsubsection{Auxiliary Lemmas}
Before proving the theorem, we need some preliminary results. 
First, we derive a Robbins-Siegmund Lemma.
 Then, we control the Markov stochasticity of the process $(e_n)$ by using the Poisson solutions given by \Cref{assumption:poisson_equation}.
  It brings out a weighted sum of Martingale difference, which can be bounded by using a concentration inequality given in \citep[Lemma 1]{Li2020Ahigh}. Finally, we conclude by reorganizing the terms.
  We first state the Robbins-Siegmund Lemma before giving others technical results.
\begin{lemma}{(Robbins-Siegmund Lemma)}\label{thm:descentlemma}
  Assume \Cref{assumption:exponential_family}-\labelcref{hyp:regularity_loss}, \Cref{assumption:bounded_update}, \Cref{assumption:asymptotic_bias}, and \Cref{assumption:smoothness}.
  Then,
  \begin{equation}
    V\left(s_{k+1}\right)\leq V\left(s_{k}\right) +
    \gamma_{k+1}\left\langle\nabla V\left(s_k\right)\middle|h\left(s_k\right)\right\rangle 
    + \gamma_{k+1} \left\langle \nabla V\left(s_k\right) \middle| e_k \right\rangle 
    + a_k + b_k \abs{h\left(s_k\right)}^2\eqsp ,
   \end{equation}
  where the constants are
  \begin{align}
      &a_k = \frac{1}{2} \lambda_M \sqrt{\tau_0}\gamma_{k+1} + L_V \sigma^2 \gamma_{k+1}^2 \eqsp, \\
      &b_k = \left(\frac{1}{2} \lambda_M \sqrt{\tau_0}+ \lambda_M \sqrt{\tau_1}\right) \gamma_{k+1} + L_V \gamma_{k+1}^2\eqsp .
  \end{align}
\end{lemma}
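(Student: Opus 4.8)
The plan is to start from the $L_V$-smoothness of $V$ granted by \Cref{assumption:smoothness}, which yields the standard descent inequality
$$V(s_{k+1}) \leq V(s_k) + \left\langle \nabla V(s_k) \middle| s_{k+1}-s_k \right\rangle + \frac{L_V}{2}\,|s_{k+1}-s_k|^2 \eqsp,$$
and then to substitute the recursion. Writing the update as $s_{k+1}-s_k = \gamma_{k+1}\bigl(S(Z_{k+1})-s_k\bigr)$ and decomposing $S(Z_{k+1})-s_k = h(s_k) + e_k + r_k$, where $e_k = S(Z_{k+1})-\check{s}_{\eta_{k+1}}(s_k)$ is the centered Markov noise and $r_k = \check{s}_{\eta_{k+1}}(s_k)-\bar{s}(s_k)$ the bias, the first-order inner product splits into the mean-field term $\gamma_{k+1}\langle\nabla V(s_k)|h(s_k)\rangle$ and the noise term $\gamma_{k+1}\langle\nabla V(s_k)|e_k\rangle$, both of which appear verbatim in the claim, plus a bias term $\gamma_{k+1}\langle\nabla V(s_k)|r_k\rangle$ that must be absorbed into $a_k + b_k|h(s_k)|^2$. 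Note that $e_k$ is deliberately left explicit here, since it is the term handled later by the Poisson-equation/concentration machinery; the present lemma is a purely pathwise bound.

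To control the bias term I would use Cauchy--Schwarz together with the identity $\nabla V(s)=-A(s)h(s)$ from \eqref{eq:eq_V_h_mm}: since the eigenvalues of $A(s)$ are bounded above by $\lambda_M$, this gives $|\nabla V(s_k)| \leq \lambda_M\,|h(s_k)|$. Feeding in the bias bound supplied by \Cref{assumption:asymptotic_bias}, of the form $|r_k| \leq \sqrt{\tau_0} + \sqrt{\tau_1}\,|h(s_k)|$, yields $\gamma_{k+1}\lambda_M\sqrt{\tau_0}\,|h(s_k)| + \gamma_{k+1}\lambda_M\sqrt{\tau_1}\,|h(s_k)|^2$; applying the elementary inequality $|h(s_k)| \leq \tfrac12\bigl(1+|h(s_k)|^2\bigr)$ to the first summand converts the linear dependence into a constant plus a multiple of $|h(s_k)|^2$. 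This produces exactly the contributions $\tfrac12\lambda_M\sqrt{\tau_0}\,\gamma_{k+1}$ to $a_k$ and $\bigl(\tfrac12\lambda_M\sqrt{\tau_0}+\lambda_M\sqrt{\tau_1}\bigr)\gamma_{k+1}$ to $b_k$.

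For the quadratic remainder I would bound $|s_{k+1}-s_k|^2 = \gamma_{k+1}^2\,|S(Z_{k+1})-s_k|^2$ using \Cref{assumption:bounded_update}, which controls the increment by a bound of the form $|S(Z_{k+1})-s_k|^2 \leq 2\sigma^2 + 2|h(s_k)|^2$ (splitting off $h(s_k)=\bar s(s_k)-s_k$ and using $(a+b)^2\le 2a^2+2b^2$). This contributes $L_V\sigma^2\gamma_{k+1}^2$ to $a_k$ and $L_V\gamma_{k+1}^2|h(s_k)|^2$ to $b_k$. Collecting the two sources gives precisely the stated $a_k$ and $b_k$.

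The step I expect to be the main obstacle is the clean separation of the increment into mean field, centered noise, and bias, together with checking that \Cref{assumption:asymptotic_bias} indeed delivers a bias bound of the split shape $\sqrt{\tau_0}+\sqrt{\tau_1}\,|h(s_k)|$ rather than a squared form $\tau_0+\tau_1|h(s_k)|^2$; the exact shape of this hypothesis is what dictates whether the constants come out as written, and likewise the constant $2\sigma^2+2|h(s_k)|^2$ must match the precise statement of \Cref{assumption:bounded_update}. Once the forms of these two bounds are pinned down, the remaining algebra is routine bookkeeping.
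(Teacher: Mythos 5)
Your proposal follows essentially the same route as the paper's proof: the $L_V$-smoothness descent inequality, the split of the increment into mean-field term, centered Markov noise $e_k$, and asymptotic bias, the bound $|\nabla V(s_k)|\leq \lambda_M |h(s_k)|$ combined with $\sqrt{a+b}\leq\sqrt{a}+\sqrt{b}$ and $a\leq\tfrac12(1+a^2)$ for the bias term, and $(a+b)^2\leq 2a^2+2b^2$ with the bounded-update assumption for the quadratic remainder, yielding exactly the stated $a_k$ and $b_k$. The point you flagged as the main obstacle is resolved exactly as you anticipated: the paper's \Cref{assumption:asymptotic_bias} gives a bound of the form $\sqrt{\tau_0+\tau_1|h(s_k)|}$ which is then split by the square-root subadditivity inequality, so no genuine gap remains.
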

The proof is delayed after the small technical results that followed.

Then, we state the concentration inequality.
\begin{lemma}[Adaptation of Lemma 1 by \citealt{Li2020Ahigh}]\label{lemma:martingale}
     Let $(E_{n})_{n\geq 1}$ be a martingale difference sequence adapted to the filtration $\left(\mathcal{F}_{n+1}=\sigma(Z_i,i\leq n+1)\right)_{n\geq 0}$ and a stochastic process $(v_n)_{n\geq 0}$ adapted to the filtration $(\mathcal{F}_{n})_{n\geq 0}$, such that for any $k\geq 0 $,
       \(\mathbb{E}^{\mathcal{F}_{k+1}}\exp\left(E_k^2/v_k^2\right) \leq \exp(1)\) .
     Then, for any fixed \(\varrho > 0\) and \(\delta \in (0, 1)\), with probability at least \(1 - \delta\), it holds that for any $n\geq 1$,
     \[n
       \sum^{n}_{k=1} E_k \leq \frac{3}{4} \varrho \sum^{n}_{k=1} v_k^2 + \frac{1}{\varrho} \log \frac{1}{\delta}.
     \]
 \end{lemma}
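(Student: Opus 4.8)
The plan is to use the Cramér--Chernoff method built on an exponential supermartingale, exactly as in the proof of Lemma~1 of \citet{Li2020Ahigh}; the only adaptation is that the conditioning is now taken with respect to the filtration $(\mathcal{F}_k)$ and the scale $v_k$ is predictable (i.e.\ $\mathcal{F}_k$-measurable), while $E_k$ is $\mathcal{F}_{k+1}$-measurable with $\mathbb{E}[E_k\mid\mathcal{F}_k]=0$. Fix $\varrho>0$. The whole argument is organised around controlling the single free parameter $\theta=\varrho$ in the exponential.

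First I would establish the conditional sub-Gaussian moment generating function bound: for every $k\ge 1$,
\[
\mathbb{E}\!\left[\exp(\theta E_k)\mid \mathcal{F}_k\right]\le \exp\!\left(\tfrac{3}{4}\theta^2 v_k^2\right),\qquad \theta\in\mathbb{R}.
\]
This is where the hypotheses enter: the Orlicz-type bound $\mathbb{E}[\exp(E_k^2/v_k^2)\mid\mathcal{F}_k]\le \exp(1)$ controls the scale, while the martingale-difference property $\mathbb{E}[E_k\mid\mathcal{F}_k]=0$ removes the linear term. Concretely, using the elementary inequality $e^x\le x+\exp(\tfrac34 x^2)$ with $x=\theta E_k$, centering yields $\mathbb{E}[\exp(\theta E_k)\mid\mathcal{F}_k]\le \mathbb{E}[\exp(\tfrac34\theta^2 E_k^2)\mid\mathcal{F}_k]$; for $\tfrac34\theta^2 v_k^2\le 1$ the right-hand side is dominated by $\exp(\tfrac34\theta^2 v_k^2)$ via Jensen's inequality applied to the concave map $u\mapsto u^{\,3\theta^2 v_k^2/4}$ together with the Orlicz hypothesis. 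Since $v_k$ is $\mathcal{F}_k$-measurable it factors out of the conditional expectation at every step, which is the only place predictability is used.

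Next I would form the process $M_n=\exp\!\big(\theta\sum_{k=1}^n E_k-\tfrac34\theta^2\sum_{k=1}^n v_k^2\big)$ with $M_0=1$. The moment generating function bound gives $\mathbb{E}[M_n\mid\mathcal{F}_n]\le M_{n-1}$, so $(M_n)$ is a nonnegative supermartingale. Ville's maximal inequality then yields $\mathbb{P}(\sup_{n\ge 1}M_n\ge 1/\delta)\le \delta\,\mathbb{E}[M_0]=\delta$. Hence, with probability at least $1-\delta$, simultaneously for all $n\ge 1$,
\[
\theta\sum_{k=1}^n E_k-\tfrac34\theta^2\sum_{k=1}^n v_k^2\le \log\tfrac1\delta ,
\]
and dividing by $\theta=\varrho>0$ gives the claimed inequality. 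Invoking Ville's inequality rather than Markov's inequality at a fixed $n$ is exactly what delivers uniformity over $n$ for free.

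The main obstacle is the moment generating function step, and specifically pinning down the constant $\tfrac34$ for \emph{all} $\theta$: the centering inequality plus Jensen covers only the small-$\theta$ regime $\tfrac34\theta^2 v_k^2\le 1$, in which $\mathbb{E}[\exp(\tfrac34\theta^2 E_k^2)\mid\mathcal{F}_k]$ is finite and interpolates the Orlicz bound, whereas for large $\theta$ that quantity may diverge and one must bound the exponential moment of $E_k$ directly from its sub-Gaussian tail, as carried out in \citet{Li2020Ahigh}. Everything else---the supermartingale property, Ville's inequality, and the final rearrangement---is routine.
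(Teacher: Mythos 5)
Your proposal is correct and follows essentially the same route as the source this lemma leans on: the paper gives no proof of its own here, it simply invokes Lemma~1 of \citealt{Li2020Ahigh}, and your argument is a faithful reconstruction of that proof (conditional MGF bound $\mathbb{E}\left[\exp(\theta E_k)\mid\mathcal{F}_k\right]\le\exp\left(\frac{3}{4}\theta^2 v_k^2\right)$ from the martingale-difference property, the Orlicz hypothesis and the $\mathcal{F}_k$-measurability of $v_k$; exponential supermartingale; maximal inequality; set $\theta=\varrho$). One point where your route actually adds something: the lemma as stated claims the bound simultaneously for all $n\ge 1$ on a single event of probability $1-\delta$, and applying Markov's inequality at a fixed horizon --- which is what the original Lemma~1 does --- only yields the fixed-$n$ version, so your appeal to Ville's inequality for the nonnegative supermartingale $M_n$ is exactly what justifies the uniform-in-$n$ form in which the paper uses the result in \Cref{lemma:second_technical}. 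The only place you defer is the constant $\frac{3}{4}$ in the MGF bound for large $\theta$: your centering-plus-Jensen argument covers $\frac{3}{4}\theta^2 v_k^2\le 1$, and the complementary regime needs the Young-inequality splitting of \citealt{Li2020Ahigh}; if you carry that out yourself, check that the two regimes actually overlap (with your elementary inequality $e^x\le x+e^{3x^2/4}$ the naive split leaves $\theta^2v_k^2\in(4/3,16/9)$ uncovered, which is why the reference tunes its constants). Finally, you silently read the hypothesis $\mathbb{E}^{\mathcal{F}_{k+1}}\exp(E_k^2/v_k^2)\le e$ as conditioning on the past $\mathcal{F}_k$; that is the standard and clearly intended reading (in the paper's actual application the bound is even pointwise), but it is worth flagging since as literally written the conditioning is on a $\sigma$-field that makes $E_k$ measurable.
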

 This lemma will be used with the following:
  \begin{lemma}
    \label{lemma:little_technical}
  Assume \Cref{assumption:exponential_family}-\labelcref{hyp:regularity_loss} and \Cref{assumption:poisson_equation}.
  Then, there exists $c\in ]0,2L_{\nu}^{(0)}]$ such that the MCMC kernel has sub-Gaussian tails for \(\nu_{s}^{\eta}\), i.e., for any $s\in\Rset^d$,
 \begin{equation}
  \label{eq:assumption_subgaussian}
   \mathbb{E}^{\mathcal{F}_{k+1}}\exp\left( {\left| \nu_{s}^{\eta}(Z_{k+1}) - \Pi_{s}^{\eta} \nu_{s}^{\eta}(Z_k) \right|}^2 / c^2 \right)
   \leq
   \exp\left(1\right).
 \end{equation}
 \end{lemma}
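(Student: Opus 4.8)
The plan is to reduce \eqref{eq:assumption_subgaussian} to the elementary statement that a bounded, centered increment is sub-Gaussian, the boundedness coming from the Poisson solution and the centering from the Markov structure. Concretely, \Cref{assumption:poisson_equation} supplies a uniform sup-norm control on the Poisson solution, which I read as a constant $L_\nu^{(0)}$ with $\sup_{s,\eta}\|\nu_s^\eta\|_\infty \leq L_\nu^{(0)}$; everything else is a contraction estimate and one line of convexity.

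First I would record the martingale-difference structure. Writing $D_k := \nu_s^\eta(Z_{k+1}) - \Pi_s^\eta\nu_s^\eta(Z_k)$ and working under the dynamics of the fixed kernel $\Pi_s^\eta$, the Markov property gives $\mathbb{E}[\nu_s^\eta(Z_{k+1})\mid\mathcal{F}_k] = \Pi_s^\eta\nu_s^\eta(Z_k)$, so that $\mathbb{E}[D_k\mid\mathcal{F}_k]=0$; thus $D_k$ is a genuine martingale-difference increment, exactly the role it plays in \Cref{lemma:martingale}. Second I would bound $|D_k|$ uniformly: since $\Pi_s^\eta$ is a Markov kernel it contracts the sup-norm, so $\|\Pi_s^\eta\nu_s^\eta\|_\infty \leq \|\nu_s^\eta\|_\infty \leq L_\nu^{(0)}$, and the triangle inequality yields $|D_k| \leq |\nu_s^\eta(Z_{k+1})| + |\Pi_s^\eta\nu_s^\eta(Z_k)| \leq 2L_\nu^{(0)}$ almost surely.

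Third I would turn this deterministic bound into the Orlicz estimate. Taking $c := 2L_\nu^{(0)}$, the pointwise inequality $|D_k|\leq c$ gives $\exp(|D_k|^2/c^2)\leq \exp(1)$ pointwise, and taking the conditional expectation $\mathbb{E}^{\mathcal{F}_{k+1}}$ preserves this bound, which is precisely \eqref{eq:assumption_subgaussian} with $c=2L_\nu^{(0)}$; this already exhibits an admissible $c\in\,]0,2L_\nu^{(0)}]$. If one wants a constant strictly inside the interval, I would instead exploit the centering: by Hoeffding's lemma a mean-zero variable supported on an interval of width $2L_\nu^{(0)}$ is sub-Gaussian with variance proxy $(L_\nu^{(0)})^2$, and the standard equivalence between a sub-Gaussian moment-generating bound and the $\psi_2$-Orlicz bound then delivers \eqref{eq:assumption_subgaussian} with some $c\leq 2L_\nu^{(0)}$.

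The argument is short and its only real ingredient is the uniform sup-norm bound from \Cref{assumption:poisson_equation}. The main point to handle carefully is the conditioning: the expectation must be understood as an average over $Z_{k+1}$ given the past $\mathcal{F}_k=\sigma(Z_i,\,i\leq k)$, so that $\Pi_s^\eta\nu_s^\eta(Z_k)$ is $\mathcal{F}_k$-measurable and serves as the conditional mean that makes $D_k$ centered. The potentially delicate question of the sharp endpoint constant is sidestepped by the existential phrasing of the statement, so the crude choice $c=2L_\nu^{(0)}$ suffices to conclude.
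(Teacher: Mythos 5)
Your proposal is correct and follows essentially the same route as the paper: both bound $|\nu_s^\eta(Z_{k+1})-\Pi_s^\eta\nu_s^\eta(Z_k)|$ by $2L_\nu^{(0)}$ using the sup-norm control from the Poisson-equation assumption, so that $\exp(|D_k|^2/c^2)\leq \exp(1)$ holds pointwise with $c=2L_\nu^{(0)}$. The additional remarks on the martingale-difference centering and the Hoeffding refinement are fine but not needed for this lemma.
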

 The proof is straightforward:
 by \Cref{assumption:poisson_equation}, for any $k\geq 0 $ and $s\in \Rset^d$,
 $$\left| \nu_{s}^{\eta}(Z_{k+1}) - \Pi_{s}^{\eta} \nu_{s}^{\eta}(Z_k) \right| / 2L_{\nu}^{(0)}  \leq 1 \eqsp ,$$
 which implies that \eqref{eq:assumption_subgaussian} holds with at least $c=2L_{\nu}^{(0)}$.

 We introduce the Martingale difference related to \Cref{lemma:martingale}.
  By using the solutions of the Poisson equation by $\mathbf{N}2$, we have for any $k\geq 0$, 
 \[
   D_k 
   \triangleq 
   \nu_{s_k}^{\eta}\left(Z_{k+1}\right) - \Pi_{s_k}^{\eta} \nu_{s_k}^{\eta}\left(Z_{k}\right),\eqsp \mathbb{E}(D_k|\mathcal{F}_k)=0
 \] 
 where use the Markov property related to $(Z_k)$ to remark that $(D_k)_k$ is a sequence of Martingale difference adapted to the filtration $(\mathcal{F}_{k+1})_k$.
  This yields the following Lemma.
 \begin{lemma}
  \label{lemma:second_technical}
 Assume \Cref{assumption:exponential_family}-\labelcref{hyp:regularity_loss} and \Cref{assumption:poisson_equation}. Then, with probability at least \(1 - \delta\), for any $n\geq 1$, 
 \begin{equation}
   -\sum_{k=1}^{n} \gamma_{k+1} \left\langle \nabla V\left(s_k\right) \middle| D_k \right\rangle
   \leq 
   \varrho \lambda_M^2 c^2 \sum^{n}_{k=1} 
   \gamma_{k+1}^2  \abs{ h\left(s_k\right) }^2  + \frac{1}{\varrho} \log \frac{1}{\delta}\eqsp ,
 \end{equation}
 where \(\varrho > 0\) is a free variable.
 \end{lemma}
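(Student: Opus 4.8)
The plan is to recognize the left-hand side as a sum of a scalar martingale difference sequence and apply the concentration inequality of \Cref{lemma:martingale}. Set
$$E_k := -\gamma_{k+1}\langle\nabla V(s_k)\mid D_k\rangle.$$
First I would check that $(E_k)$ is a martingale difference sequence adapted to $(\mathcal{F}_{k+1})$: since $s_k$ is $\mathcal{F}_k$-measurable and $\gamma_{k+1}$ is deterministic, the vector $\gamma_{k+1}\nabla V(s_k)$ is $\mathcal{F}_k$-measurable, so that $\mathbb{E}(E_k\mid\mathcal{F}_k) = -\gamma_{k+1}\langle\nabla V(s_k)\mid\mathbb{E}(D_k\mid\mathcal{F}_k)\rangle = 0$ by the centering property of $D_k$ established above.

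Next I would exhibit a predictable envelope $v_k$ verifying the sub-Gaussian hypothesis of \Cref{lemma:martingale}. Take $v_k^2 := c^2\gamma_{k+1}^2|\nabla V(s_k)|^2$, which is $\mathcal{F}_k$-measurable. Cauchy--Schwarz gives $E_k^2 = \gamma_{k+1}^2\langle\nabla V(s_k)\mid D_k\rangle^2 \leq \gamma_{k+1}^2|\nabla V(s_k)|^2|D_k|^2$, whence $E_k^2/v_k^2 \leq |D_k|^2/c^2$ pointwise. Monotonicity of $\exp$ together with the conditional bound of \Cref{lemma:little_technical} then yields
$$\mathbb{E}^{\mathcal{F}_{k+1}}\exp(E_k^2/v_k^2) \leq \mathbb{E}^{\mathcal{F}_{k+1}}\exp(|D_k|^2/c^2) \leq \exp(1),$$
so the hypotheses of \Cref{lemma:martingale} hold with this pair $(E_k,v_k)$.

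Invoking \Cref{lemma:martingale} produces, with probability at least $1-\delta$ and simultaneously for all $n\geq1$,
$$-\sum_{k=1}^n\gamma_{k+1}\langle\nabla V(s_k)\mid D_k\rangle \leq \tfrac34\varrho c^2\sum_{k=1}^n\gamma_{k+1}^2|\nabla V(s_k)|^2 + \frac1\varrho\log\frac1\delta.$$
To close the proof I would replace the gradient norm by $|h|$ via \eqref{eq:eq_V_h_mm}: combining the lower bound $|\nabla V(s)|^2/\lambda_M \leq \langle\nabla V(s)\mid h(s)\rangle$ with Cauchy--Schwarz $\langle\nabla V(s)\mid h(s)\rangle \leq |\nabla V(s)|\,|h(s)|$ gives $|\nabla V(s)| \leq \lambda_M|h(s)|$, hence $|\nabla V(s_k)|^2 \leq \lambda_M^2|h(s_k)|^2$. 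Since $\tfrac34\leq1$ and every summand is nonnegative, the factor $\tfrac34$ may be absorbed into $1$, which reproduces exactly the stated bound with leading constant $\varrho\lambda_M^2 c^2$.

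The argument is essentially a matter of assembling pieces already prepared, and I do not expect a genuine obstacle. The only point demanding care is the choice of the predictable envelope: $v_k$ must be $\mathcal{F}_k$-measurable for the conditional reasoning of \Cref{lemma:martingale} to apply, yet large enough to dominate $E_k$ after Cauchy--Schwarz. The choice $v_k^2 = c^2\gamma_{k+1}^2|\nabla V(s_k)|^2$ achieves both, because it isolates the $\mathcal{F}_k$-measurable deterministic part $\gamma_{k+1}\nabla V(s_k)$ from the genuinely random increment $D_k$, whose exponential moment is exactly what \Cref{lemma:little_technical} controls.
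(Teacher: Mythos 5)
Your proof is correct and follows essentially the same route as the paper's: define $E_k=-\gamma_{k+1}\langle\nabla V(s_k)\mid D_k\rangle$, bound it by Cauchy--Schwarz, verify the sub-Gaussian condition via \Cref{lemma:little_technical}, and invoke \Cref{lemma:martingale}. The only (immaterial) difference is that you keep $|\nabla V(s_k)|$ in the envelope $v_k$ and convert to $\lambda_M|h(s_k)|$ at the end, whereas the paper applies the bound $|\nabla V(s_k)|\leq\lambda_M|h(s_k)|$ from \eqref{eq:eq_V_h_mm} before defining $v_k$.
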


 \newpage
 \subsubsection{Proof of \Cref{lemma:second_technical}.}
 For any $k\geq 1$, denoting by
 \begin{alignat}{2}
   E_k \triangleq
   -\gamma_{k+1} \left\langle \nabla V\left(s_k\right) \middle| D_k \right\rangle \eqsp ,
 \end{alignat}
 we have,
 \begin{alignat}{2}
     \abs{E_k}^2
     &=
     \gamma_{k+1}^2 \abs{\left\langle \nabla V\left(s_k\right) \middle|D_k \right\rangle}^2
     \\
     &\leq
     \gamma_{k+1}^2 \abs{ \nabla V\left(s_k\right) }^2 \abs{ D_k }^2
     &&\qquad\text{(Cauchy-Schwarz)}
     \\
     &\leq
     \lambda_M^2 \gamma_{k+1}^2 \abs{ h\left(s_k\right) }^2 \abs{ D_k }^2 \eqsp .
     &&\qquad\text{(Lemma 1)}
 \end{alignat}
 Since \(D_k\) is sub-Gaussian by \Cref{lemma:little_technical}, $(E_k)$ is also sub-gaussian such that
 \[
   \mathbb{E}^{\mathcal{F}_{k+1}} \exp\left(\abs{E_k}^2/ \left(\lambda_M^2 \gamma_{k+1}^2 \abs{ h\left(s_k\right) }^2 c^2 \right) \right) \leq \exp\left(1\right)\eqsp .
 \]
 Thus, \Cref{lemma:martingale} applies after setting $v_k\triangleq \lambda_M^2 \gamma_{k+1}^2 \abs{ h\left(s_k\right) }^2 c^2 $, which is adapted to the filtration $\mathcal{F}_{k} $.
 Then, with probability at least \(1 - \delta\), for any $n \geq 1$,
 \begin{align}
   \sum_{k=1}^n E_k
   &\leq 
   \frac{3}{4} \varrho\sum^{n}_{k=1} v_k^2 + \frac{1}{\varrho} \log \frac{1}{\delta}
   \\
   &=
   \frac{3}{4} \varrho\sum^{n}_{k=1} 
   \lambda_M^2 \gamma_{k+1}^2 \abs{ h\left(s_k\right) }^2 c^2 + \frac{1}{\varrho} \log \frac{1}{\delta}
   \\
   &\leq
   \varrho \lambda_M^2 c^2 \sum^{n}_{k=1} 
   \gamma_{k+1}^2  \abs{ h\left(s_k\right) }^2  + \frac{1}{\varrho} \log \frac{1}{\delta}\eqsp ,
 \end{align}
 where we have only organized the constants in the last inequality.

\newpage
\subsubsection{Proof of \Cref{thm:descentlemma}}
  We now prove the Robbins-Siegmund Lemma.
  From the \(L_V\)-smoothness of the Lyapunov function by \Cref{assumption:smoothness}, we have for any $h\geq 0$,
  \begin{align}
    &V\left(s_{k+1}\right) \leq
    V\left(s_{k}\right)
    +
    \gamma_{k+1} 
    \left\langle
    \nabla V\left(s_k\right)
    \middle|
    H\left(s_k, Z_{k+1}\right)
    \right\rangle
    +
    \frac{L_V \gamma_{k+1}^2}{2} 
    {\left| H\left(s_k, Z_{k+1}\right) \right|}^2
    \\
    &=
    V\left(s_{k}\right)
    +
    \gamma_{k+1} 
    \left\langle
    \nabla V\left(s_k\right)
    \middle|
    h\left(s_k\right) + \xi_k
    \right\rangle
    +
    \frac{L_V \gamma_{k+1}^2}{2} 
    {\left| h\left(s_k\right) + H\left(s_k, Z_{k+1}\right) - h\left(s_k\right) \right|}^2\eqsp ,
  \shortintertext{applying the inequality \( {\left| a + b \right|}^2 \leq 2 {\left|a\right|}^2 + 2 {\left|b\right|}^2,\)}
    &\leq
    V\left(s_{k}\right)
    +
    \gamma_{k+1} 
    \left\langle
    \nabla V\left(s_k\right)
    \middle|
    h\left(s_k\right)+\xi_k
    \right\rangle
    +
    L_V \gamma_{k+1}^2
    \left(
    {\left| h\left(s_k\right) \right|}^2
    + {\left| H\left(s_k, Z_{k+1}\right) - h\left(s_k\right) \right|}^2
    \right)\eqsp ,
  \shortintertext{and applying \Cref{assumption:bounded_update},}
    &\leq
    V\left(s_{k}\right)
    +
    \gamma_{k+1} 
    \left\langle
    \nabla V\left(s_k\right)
    \middle|
    h\left(s_k\right)
    \right\rangle
    +
    \gamma_{k+1} 
    \left\langle
    \nabla V\left(s_k\right)
    \middle|
    \xi_k
    \right\rangle
    +
    L_V \gamma_{k+1}^2
    {\left| h\left(s_k\right) \right|}^2
    +
    L_V \gamma_{k+1}^2 \sigma^2
    \\
    &=
    V\left(s_{k}\right)
    +
    \underbrace{
    \gamma_{k+1} 
    \left\langle
    \nabla V\left(s_k\right)
    \middle|
    h\left(s_k\right)
    \right\rangle
    }_{\text{Mean-field dynamics}}
    +
    \underbrace{
    \gamma_{k+1} 
    \left\langle
    \nabla V\left(s_k\right)
    \middle|
    \xi_k
    \right\rangle
    }_{\text{MCMC bias dynamics}}
    +
    L_V \gamma_{k+1}^2
    \left(
    {\left| h\left(s_k\right) \right|}^2
    +
    \sigma^2
    \right)\eqsp .
  \end{align}
  
  For the bias dynamics, recall that the bias can be decomposed as
  \[
    \xi_k = \underbrace{e_k}_{\text{non-asymptotic bias}} + \underbrace{\beta_k}_{\text{asymptotic bias}}\eqsp .
  \]
  Considering this, the bias dynamics can be decomposed as
  \begin{align}
    \left\langle
    \nabla V\left(s_k\right)
    \middle|
    \xi_k
    \right\rangle
    =
    \left\langle
    \nabla V\left(s_k\right)
    \middle|
    e_k + \beta_k
    \right\rangle
    =
    \left\langle
    \nabla V\left(s_k\right)
    \middle|
    e_k
    \right\rangle
    +
    \left\langle
    \nabla V\left(s_k\right)
    \middle|
    \beta_k
    \right\rangle\eqsp .
  \end{align}
  For the asymptotic bias, we have
  \begin{align}
    \left\langle
    \nabla V\left(s_k\right)
    \middle|
    \beta_k
    \right\rangle
    &\leq
    \abs{
      \nabla V\left(s_k\right)
    }
    \abs{
      \beta_k 
    },
  \shortintertext{applying \Cref{lemma:preliminary} and \Cref{assumption:asymptotic_bias},} 
    &\leq
    \lambda_M
    \abs{
    h\left(s_k\right)
    }
    \sqrt{
      \tau_0+ \tau_1\abs{h\left(s_k\right)}
    }\eqsp ,
  \shortintertext{applying the inequality \( \sqrt{a + b} \leq \sqrt{a} + \sqrt{b}\), }
    &\leq
    \lambda_M
    \left(
    \sqrt{\tau_0} \abs{h\left(s_k\right)}
    + 
    \sqrt{\tau_1} \abs{h\left(s_k\right)}^2
    \right)\eqsp ,
  \shortintertext{and the inequality \(a \leq \nicefrac{1}{2}\left(1 + a^2\right)\) for \(a \geq 0\),}
    &\leq
    \lambda_M
    \left\{
    \frac{1}{2} \sqrt{\tau_0}
    +
    \left(\frac{1}{2} \sqrt{\tau_0} + \sqrt{\tau_1}\right) \abs{h\left(s_k\right)}^2
    \right\}\eqsp .
  \end{align}
  Combining the results, we have
  \begin{align}
    V\left(s_{k+1}\right)
    &\leq
    V\left(s_{k}\right)
    +
    \gamma_{k+1} 
    \left\langle
    \nabla V\left(s_k\right)
    \middle|
    h\left(s_k\right)
    \right\rangle
    +
    L_V \gamma_{k+1}^2
    \left(
    {\left| h\left(s_k\right) \right|}^2
    +
    \sigma^2
    \right)
    \\
    &\quad+
    \gamma_{k+1} 
    \left\langle
    \nabla V\left(s_k\right)
    \middle|
    e_k
    \right\rangle
    +
    \lambda_M \gamma_{k+1}
    \left(
    \frac{1}{2} \sqrt{\tau_0}
    +
    \left(\frac{1}{2} \sqrt{\tau_0} + \sqrt{\tau_1}\right) \abs{h\left(s_k\right)}^2
    \right)
    \\
    &=
    V\left(s_{k}\right)
    +
    \gamma_{k+1} 
    \left\langle
    \nabla V\left(s_k\right)
    \middle|
    h\left(s_k\right)
    \right\rangle
    +
    \gamma_{k+1} 
    \left\langle
    \nabla V\left(s_k\right)
    \middle|
    e_k
    \right\rangle
    \\
    &\quad
    +
    \left( \frac{1}{2} \lambda_M \sqrt{\tau_0}\gamma_{k+1} + L_V \sigma^2 \gamma_{k+1}^2 \right)
    +
    \left( \frac{1}{2} \lambda_M \sqrt{\tau_0}\gamma_{k+1} + \lambda_M \sqrt{\tau_1}\gamma_{k+1} + L_V \gamma_{k+1}^2\right) \abs{h\left(s_k\right)}^2\eqsp .
  \end{align}

\newpage
\subsubsection{Proof of \Cref{thm:nonasymptotic}}

    From \Cref{thm:descentlemma}, we have for any $k\geq 0$,
    \[
      -\gamma_{k+1} 
      \left\langle
      \nabla V\left(s_k\right)
      \middle|
      h\left(s_k\right)
      \right\rangle
      \leq
      V\left(s_{k}\right)
      -
      V\left(s_{k+1}\right)
      +
      \gamma_{k+1} 
      \left\langle
      \nabla V\left(s_k\right)
      \middle|
      e_k
      \right\rangle
      +
      a_k
      +
      b_k \abs{h\left(s_k\right)}^2\eqsp .
    \]
    Applying \Cref{lemma:preliminary},
    \[
      \gamma_{k+1} \lambda_m \abs{h\left(s_k\right)}_2^2
      \leq
      V\left(s_{k}\right)
      -
      V\left(s_{k+1}\right)
      +
      \gamma_{k+1} 
      \left\langle
      \nabla V\left(s_k\right)
      \middle|
      e_k
      \right\rangle
      +
      a_k
      +
      b_k \abs{h\left(s_k\right)}^2\eqsp .
    \]
    Summing this bound for \(k = 0, \ldots, n\) forms a telescoping sum as
    \begin{align}
      &\lambda_m 
      \sum^{n}_{k=0}
      \gamma_{k+1}
      \abs{h\left(s_k\right)}_2^2
      \\
      &\leq
      V\left(s_{0}\right)
      -
      V\left(s_{n+1}\right)
      +
      \sum^{n}_{k=0}
      \gamma_{k+1} 
      \left\langle
      \nabla V\left(s_k\right)
      \middle|
      e_k
      \right\rangle
      +
      \sum_{k=0}^{n} a_k
      +
      \sum^{n}_{k=0} b_k \abs{h\left(s_k\right)}^2
      \\
      &\leq
      V\left(s_{0}\right)
      -
      V^*
      +
      \underbrace{
      \sum^{n}_{k=0}
      \gamma_{k+1} 
      \left\langle
      \nabla V\left(s_k\right)
      \middle|
      e_k
      \right\rangle
      }_{\text{non-asymptotic bias dynamics}}
      +
      \sum_{k=0}^{n} a_k
      +
      \sum^{n}_{k=0} b_k \abs{h\left(s_k\right)}^2\eqsp .
    \end{align}
    
    Now, we focus on the non-asymptotic bias dynamics term.
    Karimi \textit{et al.} \cite[Theorem 2]{karimi2019non-asymptotic} show that the sum of inner products can be decomposed as
    \begin{align}
      -\sum_{k=0}^n
      \gamma_{k+1} 
      \left\langle
      \nabla V\left(s_k\right)
      \middle|
      e_k 
      \right\rangle
      &=
      A_1 + A_2 + A_3 + A_4 + A_5 \eqsp ,
    \end{align}
    where the terms are 
    \begin{align}
      A_1 &= -\sum_{k=1}^{n} \gamma_{k+1} \left\langle \nabla V\left(s_k\right) \middle| \nu^{\eta}_{s_k}\left(Z_{k+1}\right) - \Pi_{s_k}^{\eta} \nu^{\eta}_{s_k}\left(Z_{k}\right) \right\rangle\eqsp ,
      \\
      A_2 &= -\sum_{k=1}^{n} \gamma_{k+1} \left\langle \nabla V\left(s_k\right) \middle| \Pi_{s_k}^{\eta} \nu^{\eta}_{s_k}\left(Z_{k}\right) - \Pi_{s_{k-1}}^{\eta} \nu^{\eta}_{s_{k-1}}\left(Z_{k}\right) \right\rangle\eqsp ,
      \\
      A_3 &= -\sum_{k=1}^{n} \gamma_{k+1} \left\langle \nabla V\left(s_k\right) - \nabla V\left(s_{k-1}\right) \middle| \Pi_{s_{k-1}}^{\eta} \nu^{\eta}_{s_{k-1}}\left(Z_{k}\right) \right\rangle\eqsp ,
      \\
      A_4 &= -\sum_{k=1}^{n} \left( \gamma_{k+1} - \gamma_k \right) \left\langle \nabla V\left(s_{k-1}\right) \middle|  \Pi_{s_{k-1}}^{\eta} \nu^{\eta}_{s_{k-1}}\left(Z_{k}\right) \right\rangle\eqsp ,
      \\
      A_5 &= -\gamma_1 \left\langle \nabla V\left(s_{0}\right) \middle| \Pi_{s_{0}}^{\eta} \nu^{\eta}_{s_{k-1}}\left(z_{1}\right) \right\rangle
      +
      \gamma_{n-1} \left\langle \nabla V\left(s_{n-1}\right) \middle| 
      \Pi_{s_{n-1}}^{\eta} \nu^{\eta}_{s_{n-1}}\left(z_{n}\right)
       \right\rangle \eqsp .
    \end{align}
    \(A_1\) is given by \Cref{lemma:second_technical} where we set \(\varrho = 1\) and use that $c\leq2L_{\nu}^{(0)} $.
    On the other hand, for \(A_2, A_3, A_4, A_5\), we can use the results in the proof of \cite[Theorem 2]{karimi2019non-asymptotic} by setting
    \begin{align}
    c_0 = 0, \quad c_1 = \lambda_m, \quad d_0 = 0, \quad d_1 = \lambda_M, \\
    L = L_V, \quad L_{\mathrm{PH}}^{(0)} = L_{\nu}^{(0)}, \quad L_{\mathrm{PH}}^{(1)} = L_{\nu}^{(1)}.
    \end{align}
    Then, with probability at least \(1 - \delta\), we have
    \begin{align}
        A_1 
        &\leq 
         (2\lambda_M L_{\nu}^{(0)})^2 \sum^{n}_{k=1} 
        \gamma_{k+1}^2  \abs{ h\left(s_k\right) }^2  + \log \frac{1}{\delta} \eqsp ,
        \\
        A_2
        &\leq
        L_{\nu}^{(1)} \lambda_M 
        \left(
          \sigma \sum^{n}_{k=1} \gamma_k^2
          +
          \left( \frac{1}{2} + \alpha_1 \sigma + \alpha_1 \frac{1}{2} \right) \sum^{n}_{k=0} \gamma_k^2 \abs{h\left(s_k\right)}^2
        \right) \eqsp ,
        \\
        A_3
        &\leq
        L_V
        L_{\nu}^{(0)} 
        \left(
          \left(1+\sigma\right) \sum^{n}_{k=1} \gamma_k^2 + \sum^{n}_{k=1} \gamma_k^2 \abs{h\left(s_{k-1}\right)}^2
        \right)
        \\
        &\leq
        L_V
        L_{\nu}^{(0)} 
        \left(
          \left(1+\sigma\right) \sum^{n}_{k=1} \gamma_k^2 + \sum^{n}_{k=0} \gamma_{k+1}^2 \abs{h\left(s_{k}\right)}^2
        \right) \eqsp ,
        \\
        A_4
        &\leq
        L_{\nu}^{(0)} 
        \left(
          \left(\gamma_0 - \gamma_n\right)
          +
          \alpha_2 \lambda_M \sum^{n}_{k=1} \gamma_{k}^2 \abs{h\left(s_{k-1}\right)}^2
        \right)
        \\
        &\leq
        L_{\nu}^{(0)} 
        \left(
          \gamma_0
          +
          \alpha_2 \lambda_M \sum^{n}_{k=0} \gamma_{k+1}^2 \abs{h\left(s_{k}\right)}^2
        \right) \eqsp ,
        \\
        A_5
        &\leq
        L_{\nu}^{(0)} \lambda_M
        \left(
          2 + \sum^{n}_{k=0} \gamma_{k}^2 \abs{h\left(s_{k}\right)}^2
        \right)\eqsp .
    \end{align}
    By reorganizing the terms,
    \begin{align}
        \sum^{n}_{k=0} \gamma_{k+1} \left(\lambda_m - C_{b_1} - C_{n_1} \gamma_{k+1}\right) \abs{h\left(s_k\right)}^2
        \leq
        V\left(s_0\right) - V^*
        +
        C_{n_2} \sum^{n}_{k=0} \gamma_{k+1}^2
        +
        C_0 + \log\frac{1}{\delta} 
        +
        C_{b_2} \sum^{n}_{k=0} \gamma_{k+1}\eqsp .
    \end{align}
    From this, we obtain the condition on the stepsize that \(\gamma_{k+1} \leq \left( \lambda_m - C_{b_1} \right) / C_{n_1} \) for all \(k = 0, \ldots, n\).
    Furthermore, constant progress can be guaranteed by further enforcing \(\gamma_{k+1} \leq \frac{1}{2} \left( \lambda_m - C_{b_1} \right) / C_{n_1} \).
    Then, since \(\gamma_{k+1} \leq \gamma_k\), the following inequalities hold:
    \begin{align}
        \sum^{n}_{k=0} \gamma_{k+1} \abs{h\left(s_k\right)}^2
        &\leq
        \frac{2}{\lambda_m - C_{b_1}}
        \left(
        V\left(s_0\right) - V^*
        +
        C_{n_2} \sum^{n}_{k=0} \gamma_{k+1}^2
        +
        C_0 + \log\frac{1}{\delta} 
        +
        C_{b_2} \sum^{n}_{k=0} \gamma_{k+1}
        \right)
        \eqsp .
    \end{align}
    Finally, dividing both sides by \(\sum_{k=0}^n \gamma_{k+1}\), we have
    \begin{align}
        \frac{1}{\sum_{k=0}^n \gamma_{k+1}} \sum_{k=0}^n \gamma_{k+1} \abs{h\left(s_k\right)}^2
        \leq
        \frac{2}{\lambda_m - C_{b_1}}
        \left(
        \frac{
          V\left(s_0\right) - V^*
          +
          C_{n_2} \sum^{n}_{k=0} \gamma_{k+1}^2
          +
          C_0 + \log\frac{1}{\delta} 
        }{
          \sum_{k=0}^n \gamma_{k+1}
        }
        +C_{b_2}
        \right)
        \eqsp .
    \end{align}
    Since the lower bound forms a weighted average, lower bounding it with the minimum over the iterates yields the result.

\newpage
\section{COMPUTATIONAL RESOURCES}\label{section:resources}
\begin{table}[h]
  \centering
  \begin{threeparttable}
  \caption{Computational Resources}\label{table:resources}
  \begin{tabular}{ll}
    \toprule
    \multicolumn{1}{c}{\textbf{Type}}
    & \multicolumn{1}{c}{\textbf{Model and Specifications}}
    \\ \midrule
    System Topology & 1 socket with 8 physical cores \\
    Processor       & 1 Intel i9-11900F, 2.5 GHz (maximum 5.2 GHz) per socket \\
    Cache           & 80 KB L1, 512 KB L2, and 16 MB L3 \\
    Memory          & 64 GiB RAM \\
    \bottomrule
  \end{tabular}
  \end{threeparttable}
\end{table}
All experiments took about 50 hours to complete.



\end{document}